\newtheorem{lemma}{Lemma}[section]
\newtheorem{definition}{Definition}[section]
\newtheorem{theorem}{Theorem}[section]
\begin{document}

  \title{Detection and Amelioration of Social Engineering Vulnerability in Contingency Table Data using an Orthogonalised Log-linear Analysis} 
 \author{Glynn Rogers,\thanks{Glynn Rogers,~Independent~Researcher,~glynn.rogers@internode.on.net,~Corresponding~author,}~ Malcolm Crompton\thanks{Malcolm~Crompton,~Information Integrity Solutions Pty Ltd, mcrompton@iispartners.com},~Gaurav Sapre\thanks{Gaurav~Sapre,~Data61,~CSIRO,~Australia,~Gaurav.Sapre@hotmail.com},~and~Jonathan Chan\thanks{Jonathan~Chan,~Data61,~CSIRO,~Australia,~Jonathan.Chan@data61.csiro.au}}
 \maketitle
 
\begin{abstract}
Social Engineering has emerged as a significant threat in cyber security. In a dialog based attack,  by having enough of a potential victim's personal data to be convincing, a social engineer impersonates the victim in order to manipulate the attack's target into revealing sufficient information for accessing the victim's accounts etc. We utilise the developing understanding of human information processing in the Information Sciences to characterise the vulnerability of the target to manipulation and to propose a form of countermeasure. Our focus is on the possibility of the social engineer being able to build the victim's profile by, in part, inferring personal attribute values from statistical information available either informally, from general knowledge, or, more formally, from some public database. We use an orthogonalised log linear analysis of data in the form of a contingence table to develop a measure of how susceptible particular subtables are to probabilistic inference as the basis for our proposed countermeasure.  This is based on the observation that inference relies on a high degree of non-uniformity and exploits the orthogonality of the analysis to define the measure in terms of subspace projections. 
\end{abstract}
\begin{IEEEkeywords}
cyber security, social engineering, data privacy, identity, personal attributes, identity impersonation, contingency table, log linear, geometric marginalisation, de-personalisation.
%\MSC[2020] 15A03 62H17 68T09
\end{IEEEkeywords}

\section{Introduction}

\subsection{The Social Engineering Problem}

Social engineering attacks on cyber infrastructure constitute a large and growing economic problem. As the comprehensive literature review in \cite{Wang} reveals, the term `social engineering' has been applied to a wide variety of illegitimate methods of gaining advantage ranging from simple breaches of physical security to sophisticated phishing campaigns \cite{phishing_attacks}. However the central feature of social engineering that emerges is the exploitation of human vulnerability through some form of interaction between attacker and the attack target. 

This is the case even where an unauthorised person gains physical entry to secure facilities by following legitimate personnel through a controlled access point (`tailgating') or enlisting the aid of an authorised person by pretending to be struggling with a heavy load \cite{Ghafir} \cite{Wang3}. Here, the attacker exploits the reluctance of people to confront a potentially awkward situation or the desire to be helpful which illustrates the general observation that humans are the weakest link in any security system \cite{Ghafir}.

Phishing is also becoming a major threat to cyber security. From the discussion in \cite{phishing_attacks} it is clear that there is a close relationship with social engineering, with some researchers describing phishing as a form of social engineering whereas others see phishing as employing social engineering. Regardless of the details, a unifying thread is the exploitation of human vulnerability for the creation of some form of deception for nefarious purposes.

Methods of defending against social engineering attacks are discussed in \cite{Ghafir}, \cite{Saleem} and \cite{Fan}. A clearly defined and well documented security policy is regarded as essential in protecting information infrastructure including robust user authentication \cite{Ghafir}. Two factor authentication is widely deployed and believed to be much more impervious to attack than simple passwords and PINS. However two factor authentication is itself vulnerable to social engineering \cite{Grimes} \cite{Siadati} by essentially tricking the victim into providing the verification code to the social engineer using the type of technique described in \cite{Grimes} and \cite{Siadati}.

This illustrates the inescapable fact that the central goal of social engineering is to evade whatever security mechanisms are deployed to  protect the target system by exploiting human vulnerabilities \cite{Saleem} \cite{Fan} \cite{Indrajit}.  Grimes \cite{Grimes} observes, from an industry perspective, that ``$\cdots$ even the very best technical controls will allow some amount of social engineering to get by your defenses. What will save you then is good security awareness training.'' Indeed \cite{Saleem} cites case studies which ``$\cdots$ demonstrate that security awareness is the crucial and most effective tool in the fight against social engineering attacks $\cdots$''.

Security awareness training is discussed from various points of view in \cite{Saleem}, \cite{Fan}, \cite{B&H} and \cite{Grimes} with the caution that this needs to be ongoing to overcome the natural tendency to forget or habituate \cite{ Saleem}. Resistance training, ``$\cdots$ aimed at making employees resilient against persuasion techniques that a social engineer may employ''  \cite{Ghafir} are also advocated by some, particularly where the interaction is in the form of a dialog. 

Extreme types of security awareness are proposals, discussed in \cite{Fan} and \cite{B&H}, for a system to monitor the emotional state of, for example, call centre operators, to ensure they are in the correct frame of mind to resist social engineering. However we regard this type of intrusive approach as draconian and likely to be ineffective because it would generate resentment and encourage evasion by the operator as well as being disruptive when the monitor decreed the operator to be in the wrong frame of mind. 

A more benign and, we suggest, more effective approach is to provide potential victims with an automated assistant which monitors the interaction with an attacker to detect indications that a social engineering attempt may be underway.  For example, Bhakta  and Harris \cite{B&H} describe a system for analysing each line in a text based interaction to detect instances of a predefined `topic black list' warning the potential victim and perhaps taking preemptive action to prevent a security violation. Similarly, in this paper we accept that the vulnerability of human operators to social engineering is inevitable and describe a system to assist them is assessing the overall reliability of the information they are being presented with.

\subsection{The Nature of Social Engineering}
\label{nature}

The centrality of human vulnerability in social engineering is emphasised by the philosophically well grounded definition proposed in  \cite{Wang} Section IV:
\begin{definition}
\label{def1}
\begin{quotation}
``In the context of cybersecurity, social engineering is a type of attack wherein the attacker(s) exploit human vulnerabilities by means of social interaction to breach cyber security, with or without the use of technical means and technical vulnerabilities.''
\end{quotation}
\end{definition} 

Continuing research in social engineering requires a sound intellectual foundation, beginning with a consistent ontology which then provides the basis for the underlying conceptual scheme. This, in turn, enables the empirical knowledge base to be ordered by constructing more specialised taxonomies. 

Previous work on developing conceptual models and taxonomies is reviewed in \cite{Fan}. Recently, Wang et. al. \cite{Wang2} have proposed a comprehensive foundation for social engineering research comprising an ontology and conceptual model with consequent taxonomies. Their conceptual model has eleven components, \cite{Wang2} Fig. 5, with each of these components having their own taxonomies populated by the ontology. 

Defining a set of formal relations between the components of the conceptual model leads to a knowledge graph in which the vertices represent the elements of the ontology and the edges represent the relations between the conceptual model components whose taxonomies include the elements.  All of this is machine readable so that the knowledge graph can be constructed computationally from empirical data on actual social engineering attack scenarios and then analysed to reveal general features of attacks.

For example the vertex degree, i.e. the number of edges associated with a vertex, is an indicator of how frequently an ontological element features in attacks. A relatively high degree can reveal a particular human or system vulnerability so that an alarm can be generated when it occurs during an attack or special attention can be paid to it in security awareness training \cite{Wang2}.

Human vulnerability is examined in detail in \cite{Fan} which proposes an ``I-E Model of Human Weakness"  as a basis for investigating social engineering attacks and potential defences. Here, `I' denotes internal which refers to the psychological characteristics of potential victims whereas `E' denotes external referring to the social environment in which manipulation of the victim occurs. Whereas \cite{Fan} uses the term `socio - psychology' as a general description of this class of model, we prefer the term `psycho - social' because it is consistent with the description of similar issues in the health sciences \cite{psycho} where the internal-external dichotomy is also employed, although the social environment there is somewhat different.

Because ``$\cdots$ the success of social engineering relies heavily on the information gathered $\cdots$'' \cite{Wang2} one of the conceptual model components  is  `social engineering information'. The importance of information is also stressed in \cite{Ghafir} and \cite{phishing_attacks}, where `information gathering' is listed as the first step in the attack cycle, and in \cite{Indrajit} where `information requirements analysis' is described as one of the prerequisites of an attack.

\section{Method}
\subsection{The Dialog Based Social Engineering Context}
\label{context}

While we agree with the generality of Definition \ref{def1}, in this paper we focus on that class of cyber security breach where the social engineer manipulates the target of an attack into revealing confidential information to be used in perpetrating some type of fraud. We distinguish between the target of the social engineering attack, on the one hand, and the victim of the subsequent fraud on the other. Whereas the target is a single human entity, the victim may be a corporate entity or the many customers of that corporate entity. 

In a typical phishing attack the target is also the victim \cite{phishing_attacks}. However in another class of attack, to which call centres in particular are susceptible, the social engineer attempts to persuade the target that the social engineer is, or represents, the intended victim by entering into some form of dialog. The immediate goal is to convince the target that the attacker is entitled to privileged information, information which can range from a simple account password to commercial or national secrets. If convinced, the target reveals the information to the attacker who then uses it to perpetrate some form of fraud against the victim. Identity theft, a growing problem, is a particular case of this form of impersonation.

Clearly the social engineer has to create a false identity based on some form of profile which will establish the credibility of the social engineer as the source of the information on which the attack is based. The wider question of assessing credibility in social networks has been surveyed in \cite{credibility} which asserts that credibility is a synonym for believability and trustworthiness although the authors concede the multidimensional nature of credibility in the social network context. Multidimensionality is also stressed in \cite{Wathen&Burkell} where trustworthiness and expertise are described as primary markers of source credibility. The basis of judgements of source credibility made by Facebook users is examined in \cite{Algarni} in four dimensions, one of which is sincerity. Judgements in this dimension were made in part on the basis of the source's profile indicating the importance of that profile being sufficiently detailed to be convincing.  

Source credibility has been examined in a variety of contexts, e.g. investigations of the evaluation of real and fake news articles \cite{real_fake} and the adoption of information in on-line communities \cite{Zang_Watts}. In both cases, the theoretical context is provided by Dual-Process models of human information processing. Whereas the Heuristic-Systematic variant of the Dual-Process model is employed in \cite{Zang_Watts}, the work in \cite{real_fake} is based on another variant, the Elaboration Likelihood Model, originally developed to investigate persuasion. 

The investigation of victimisation in phishing attacks in \cite{Luo} is also based on the Heuristic-Systematic Model (HSM) where information processing occurs in two modes. In the heuristic mode a message is evaluated quickly on the basis of a set of available cues, one of which is source credibility, whereas in the systematic mode the evaluation involves cognitive processes and is consequently much slower and more deliberative. 

These two modes are not mutually exclusive and have been demonstrated to operate concurrently \cite{Zang_Watts} \cite{Luo} such that one can moderate the other i.e. one can reinforce the other (referred to as additivity) or one can attenuate the effects of the other. In particular, source credibility has been demonstrated to reinforce the results of systematic processing \cite{Zang_Watts} and, conversely, systematic processing can attenuate the effect of heuristic cues such as source credibility.

In the type of phishing attack considered here, the social engineer presents an argument aimed at persuading the target to perform some action - supplying confidential information or logging into a fake website for example. The target invokes some form of heuristic-systematic process to evaluate the argument. However if systematic processing dominates, the flaws in the argument are likely to be exposed \cite{Luo}  so it is in the social engineer's interest to suppress systematic processing in favour of heuristic processing by manipulation of the target.  One method of achieving this is to place the target under time pressure which suppresses systematic processing \cite{Luo} \cite{Zang_Watts}.

An important aspect of the HSM is that it recognises that people will terminate the process when they feel comfortable with the judgement that has been evolving. In the model this is referred to as reaching the individual's {\it sufficiency threshold} which depends on the prevailing circumstances. Lowering the sufficiency threshold will result in the individual favouring the heuristic  over the systematic mode. 

In \cite{Luo}, it is proposed that the success of an attack depends, at least in part, on using an attack pretext to suppress the target's sufficiency threshold. Pretexting \cite{Luo} \cite{Wang} is where the social engineer, in preparing the attack, creates a scenario which provides the informational context within which the attack takes place, a context which can be also designed to reinforce the perception of source credibility. Indeed we regard the pretext as a core component of the type of dialog based phishing attack considered in this paper.

\subsection{Information Insufficiency and the Socially Engineered Phishing Attack Model}
\label{SEP}

However we believe that a more detailed account of the relationship between the pretext and the sufficiency threshold can be developed from the Risk Information and Seeking Processing (RISP) model of information science which incorporates the Heuristic-Systematic Model. The RISP Model has been widely researched in the information sciences as an explanation for how and why people seek additional information in the context of risk (see \cite{Yang} for a review and discussion). While developing the full relationship between social engineering and RISP is beyond the scope of this paper, the essential idea for our purposes is as follows.

Firstly we observe that risk is inherent in a dialog based phishing attack and that risk is perceived at some level by the target. One of the main components of RISP is  {\it information insufficiency} which is defined as the difference between an individual's current knowledge level and their sufficiency threshold. A higher level of information insufficiency, all else being equal, will drive the individual towards higher levels of information seeking and processing \cite{GDN} and, consequently, to bias their processing mode towards systematic processing.

In light of this we propose that pretexting, rather than lowering the sufficiency threshold as hypothesised in \cite{Luo}, instead raises the perceived level of current knowledge relevant to the attack. The pretext thus decreases the level of information insufficiency thereby reducing the perceived need to seek additional information by interaction with the attacker. Furthermore a carefully crafted pretext can lower the level to the point where the heuristics will override any doubts that may arise from systematic processing of the pretext. Consequently pretexting complements source credibility by ensuring the target is more strongly influenced by source credibility and other heuristics than by the merits of the arguments presented by the social engineer.

All of these factors lead us to define a model of a socially engineered phishing (SEP) attack, which we refer to as a SEP attack model, as: 

\begin{definition}
\label{SEPdef}
The SEP attack model has the following three components:
\begin{enumerate}
\item the source i.e. the attacker, with an associated source credibility,
\item the pretext providing the informational context in which the attack occurs and
\item the set of demands and the argument for why the target should conform to them.

\end{enumerate}
\end{definition}
Note that all three components are fake but presumably sufficiently plausible that the target is willing to engage with the attacker. 

Credibility generally increases with familiarity \cite{Wathen&Burkell} which implies that a completely unfamiliar pretext will be regarded with considerable skepticism by the target suggesting the engagement of some level of systematic processing. Provisionally accepting the pretext implies that at least some of the components of the pretext  are held by the target to be true. Because a high level of systematic processing would require that the large proportion of the components be held to be true, suppressing the sufficiency threshold reduces the proportion of components that the target needs to recognise as true in order to accept the pretext. 

The social engineer is then free to invent the remaining components but with the constraint that consistency is important for credibility \cite{Wathen&Burkell} so that  inconsistency in the information presented to the target results in a greater reliance on systematic processing \cite{Zang_Watts}. This clearly conflicts with the original aim of ensuring the dominance of heuristic processing. Consequently, the fake components of the pretext or message must be chosen to appear consistent with the components that the social engineer can assume the target will regard as true on the basis of the social engineer's research into the target's background.

All of this emphasises the critical role played by information as was stressed in Section \ref{nature}. It becomes clear that a social engineer needs to mount a significant effort to painstakingly acquire the information required for a successful attack. The nature of the information is such that the acquisition process can be indirect and complex so that social engineers ``$\cdots$ will tie little pieces of information they have acquired over time, decipher cues and signals given to them by multiple employees, and then connect the pieces of the jigsaw puzzle to unearth the information they have been after'' \cite{Saleem}.

Previous work has focussed on the exploitation of whatever factual information the social engineer can extract directly from available sources. However we argue that the `pieces of the jigsaw puzzle' acquired by a social engineer will in general be incomplete and perhaps inconsistent so that `unearth(ing) the information' will require significantly more that merely assembling a collection of facts. This is particularly the case with the type of personal detail that will lend a strong sense of authenticity to the pretext.

Specifically, we base this paper on our assertion that information acquisition can be enhanced by inferring additional information from, not only that already acquired but, importantly, contextual  information as well. Rarely, however, will this be a straightforward matter of deductive inference leading to certainty. Instead the inferences will be probabilistic in nature. 

\subsection{Probabilistic Inference, Analogical Inference, and Social Engineering}
\label{inference}

Probabilistic inference techniques are finding increasing application in database security research particularly in characterising and detecting network based attacks. In fact in a database inference attack, inference procedures are an intrinsic component of the attack methodology itself and developing effective countermeasures against inference attacks is an established aspect of database management research \cite{Farkas}. The underlying concept is that of an `inference channel' which is generated by the patterns of association in the data which support inferences. 

Of greater relevance to this paper are a number of techniques which have been developed to infer unknown (latent) attribute or demographic values from a known set derived from social network data. For example a machine learning method based on feature vectors is described in \cite{Zamal} whereas probabilistic graph theoretic techniques based on Markov Random Field concepts are proposed in \cite{MIT} and \cite{Iowa}. 

Underlying the probabilistic approaches is the characterisation of the set of attributes as random variables described by a multivariate probability distribution which is estimated, at least implicitly, from the known values. However in both \cite{MIT} and \cite{Iowa} the computational problem is made tractable by using binary variables and by adopting a pairwise Markov Random Field Model. The latter places quite strong constraints on the form of the multivariate distribution restricting it to a product of univariate and bivariate `potential' functions \cite{Koller}. In addition, the assortativity \cite{MIT} of the associated social network defines a neighbourhood structure which further restricts the form of distribution with consequent restrictions on the conditional independencies between the attributes \cite{Koller}. 

Even with their simplifying assumptions these are not simple, straightforward procedures either conceptually or computationally. To be clear, we are not suggesting that social engineers in general have the capacity to utilise these techniques to infer missing information even where the data was available. Instead we develop the argument below that the social engineer will make these inferences analogically based on a broad understanding of the socioeconomic context and human nature.

To formalise these ideas we use the terminology of Clarke \cite{Clarke}, which distinguishes between an entity and an identity. An entity is something in the world, not necessarily human, which can present many different identities depending on the particular role that entity is playing in the relevant context. Whereas the social engineer's intended victim is an entity, it is that entity's role in transacting with the organisation represented by the human target that is the social engineer's focus. 

Both the social engineer and the target are dealing with the particular identity presented by the entity  comprising that subset of the entities attributes required for the role.  Identities are distinguished by the values that those attributes have in each particular case, those values providing the basis for identification.

For identification purposes, the target will have access to a record associated with the victim in the form of a digital persona which Clarke defines as ``(t)he collection of data stored in a record (which) is designed to be rich enough to provide the record-holder with an adequate image of the represented entity or identity" \cite{Clarke}. Bearing in mind that the digital persona is, by design, a limited view of the victim, the key aspect of this is the social engineer having a sufficiently plausible  profile of the victim to convince the target that the social engineer is indeed the victim. 

Unfortunately this is made increasingly easy by the eagerness of people in general to place personal details in specific profiles on the various forms of social media. However, even hiding identities behind handles is an increasingly flimsy defence \cite{Wang} so that a social engineer is increasingly able to acquire key elements of the identity relevant to the attack. 

An important component of the social engineer's skill set is collecting these identity elements from a variety of public sources. Nevertheless, in general, publicly available information will be insufficient for plausibility, requiring supplementary information to flesh out the persona giving the target the sense that they are indeed communicating with the victim. For this to succeed the supplementary information need not be accurate but merely consistent with the digital persona. 

One of the key proposals of this paper is that this supplementary information can be derived by the social engineer, in part by inference from already obtained information perhaps by formal statistical techniques or, more than likely, by using general background knowledge of existing socio-economic patterns. Indeed we assume that astute social engineers are adept at observing and exploiting patterns in the socio-economic events surrounding them. Experimental psychology has demonstrated a clear connection between explanation, on the one hand, and learning and inference on the other \cite{Lombrozo2011}\cite{Lombrozo2016} so we assume, in addition, that those patterns which have an explanatory context, will predominate. 

The implication is that the social engineer makes inferences in an explanatory framework.  Based on the close psychological connection between explanation and analogy \cite{HLB}, we believe that this is best categorised as a form of analogical inference. For example a young person living in a country town is in an analogous situation to many young people living in many country towns. It can then be inferred that a particular young person is very likely to be unemployed {\it because} the economies of many country towns are contracting offering limited employment opportunities to young people as is well known. 

The essential point is that it is this analogous explanatory framework which lends plausibility, in the mind of the target, to the inferred information used in perpetrating the fraud rather than the formal statistical properties of the publicly available information. Consequently we do not require social engineers to have access to a formal database, nor to have the skills to analyse the data even if they did. Nevertheless the social engineer will be well aware informally of those statistical properties which will form part of the motivation for drawing  the inferences depending on the perceived level of statistical relevance.

The notion of statistical relevance in its various forms has played a role in the philosophical analysis of explanation in a scientific context since \cite{SalmonSR}. In experimental psychology the relationship between explanatory power and statistical relevance has been demonstrated using simple measures,\cite{colombo1}. At a more theoretical level, Schupbach and Sprenger \cite{S&S} characterise statistical relevance in terms of the degree to which one proposition H (e.g. a hypothesis) makes a second proposition E (e.g. empirical evidence) less surprising. They then use this in a Bayesian context as part of their definition of a measure of explanatory power which Schupbach \cite{Schup} has demonstrated successfully accounts for the judgements of explanatory power made by subjects in a series of psychological experiments. 

All of this suggests that astute social engineers will generally  base their analogical inferences on informal assessments of statistical relevance derived from observation of socio-economic patterns. However, instead of attempting to describe and analyse analogical inferences, we focus on the more easily defined task of analysing the underlying probabilistic structure that the psychological evidence suggests reflects and supports analogical inference. While the explanatory aspects of the type of analogical inference referred to here have a strong causal flavour, we will regard probabilistic information as purely associative having no sense of causality \cite{Pearl}.  However we recognise that this creates the potential for false positives.

By analysing the probabilistic structure of the publicly available information we can anticipate the drawing of analogical inferences and a key outcome of our research is an indicator of the degree to which a dataset is susceptible to inference. This indicator can be exploited to design countermeasures that reveal, and warn of, the patterns of association that underly the analogical inference, alerting potential targets to the possibility that information that they might be presented with is unreliable. 
\subsection{Contributions of the Paper}
\label{contributions}

We have discussed the critical role played by information in Sections \ref{nature} and \ref{SEP} with an essential part of the attack preparation being the careful assembly of as much detail about the attack victim and context as is feasible \cite{Wang2} \cite{Ghafir} \cite{phishing_attacks} \cite{Indrajit}. This information is used the construct the pretext which is an essential component of a social engineering phishing attack. The Heuristic-Systematic (HS) model of information processing developed in the Information Sciences has previously been used in \cite{Luo} as a basis for analysing the interaction between attacker and target. 

However, because the attack occurs in the context of risk, in this paper we extend that analysis by adopting the Risk Information Seeking and Processing (RISP) model \cite{GDN} which incorporates the HS model. An important component of the RISP model is information insufficiency which acts as the primary motivator for seeking and processing further information. In Section \ref{SEP} we invoke the RISP model to argue the importance of a sufficiently convincing pretext to reduce the target's information insufficiency level to the point where the target avoids the systematic mode of processing. The target then does not subject the available information to the more detailed analysis which might raise doubts in the target's mind about the veracity of the attacker's claims.  

Our underlying thesis is that the details revealed by the attacker's careful assembly of existing information are generally insufficient for building a convincing pretext. However they can be sufficient for the social engineer to infer probabilistically enough additional details to persuade the target that the claims being made in the pretext are valid if the target processes the claims heuristically. Consequently a seemingly convincing pretext may be less substantial than it appears because it is based partly on inferred attribute values. This raises the question of whether this situation can be detected and, if so, can the degree to which a pretext is dependent on inference be estimated. We seek to answer this question in this paper.

Our primary goal, then, is to devise a means of advising potential targets, particularly in informal database environments, that certain combinations of attributes might be unreliable for identification because the values of some are inferable from others. In other words the social engineer is likely to have inferred some attribute values thus appearing to be the victim who of course would know those values. We believe that our approach of using indications of inference as the basis for advising targets of the need for caution has not previously been reported in the literature.

Note that this is quite distinct from describing methods for actually drawing inferences, using, for example, the conceptually and computationally complex processes described in \cite{Zamal}, \cite{MIT}, and \cite{Iowa}. Instead, in Section \ref{contingency} below we develop a method of estimating the potential of a data set for inference to be performed without actually computing the inferences thus avoiding much of the complexity.

In the following we proceed from the observation that inference relies on relationships between the attributes where, in this case, the relationships are not only explanatory, thus supporting analogical inference, but are also manifested statistically enabling complementary probabilistic inference. The principal theoretical focus of the paper, then, is on the identification of attributes where unknown attribute values can be estimated by probabilistic inference from some database using the known values of other attributes. 

Like \cite{Guarnieri}, and \cite{epistemic}  we assume the existence of a database derived from some defined population from which can be extracted a multidimensional contingency table ${\mathcal T}$ in which the cells of the table are indexed by an ordered index vector $\mathbf{I}$. Each cell contains the number of members of the population with the combination of attribute values referenced by the index so the table can be conceptualised as an unnormalised discrete multivariate probability distribution.  However, instead of using the posterior joint distribution \cite{epistemic} directly or the posterior marginal distribution \cite{Guarnieri}, we employ a log linear transformation of the prior distribution to analyse the probabilistic relations between subsets of attributes which give rise to the structure in the table.  

Because inferences are drawn from prior information we focus on the contingency table equivalent of a conditional probability distribution which we refer to as a conditional subtable. We will assume inference is immediate in the sense that the conditional distribution is dominated by, i.e. concentrated on, a particular combination of attribute values. These can then be taken as the unknown values with high likelihood. 

We develop a novel computational procedure for detecting particular subsets of attributes which have conditional subtables dominated by a small set of values without having to analyse each individual subtable. The basis of this procedure is the observation that a uniform distribution is completely uninformative and that, in order for probabilistic inference to be viable, a distribution must exhibit significant deviation from uniformity. This is because of  some property of whatever mechanism is generating the cell contents - a mechanism that causes some cells to be significantly more probable than others. 

In other words these cells are salient and we measure salience as a deviation from uniformity. Roughly speaking we argue that a high level of salience in a subset of attributes results in the multivariate conditional subtable of those attributes being concentrated on just some small number of the attribute levels in the subtable. This does not require that the dominant cells have any topological relationship. In particular it does not require that they be neighbours so that standard measures based, for example, on second moments, are not suitable for our purpose. 

In Section \ref{loglinear} we develop the vector space theory underlying our analysis utilising the orthogonal log linear transformation introduced by Dahinden et. al. \cite{Dar2}. Importantly this is based on projecting the vector representing the logarithm of the table onto a subspace orthogonal to the uniform vector. After analysing the orthogonalised log linear design matrix we state and prove Theorem \ref{Theorem1} in Section \ref{Conditional}. This novel result enables us to reduce a large contingency table to one involving a subset of attributes using geometrical means as a form of marginalisation while preserving salience as much as possible. 

Theorem \ref{Theorem1} also provides the basis for our definition of a novel indicator, Probabilistic Salience, of the degree to which statistical data can support an inference that particular attributes have specific values (Definition \ref{salience_factor}). We describe a means of evaluating that indicator from the data and demonstrate how it can identify those subsets of attributes in a contingency table which are most vulnerable to inferencing . Then, in Section \ref{salience}, we show why Probabilistic Salience does indeed measure the degree to which a subset of attributes is dominated by a relatively small number of members.

Most of Section \ref{xi} as well as the rest of Section \ref{loglinear} report the results of research which, to the best of our knowledge, is original.  We believe that this work also makes a significant contribution to the contingency table literature because we were unable to find anything there that investigated this type of problem. 

We envisage a novel system which employs Probabilistic Salience to generate a warning when some of the attribute values being claimed by a social engineer might have been inferred from others without having to classify them. The warning would take account of the level of any formal authentication provided and could be as simple as a red, amber, green traffic light indication. While this system would be most easily implemented in a text based dialog we see nothing in principle which would prevent speech recognition being used in a verbal dialog. Such a system could complement the topic blacklist system described in \cite{B&H} as well as that referred to in \cite{Wang2} and might form a component of a more comprehensive intelligent assistant for combatting social engineering attacks in real time.

The central objective of our proposed probabilistic salience based warning system is to increase the target's information insufficiency level by inducing the target to derate the inferred attributes thereby reducing the target's current knowledge level. If the inferred content is large enough, the information insufficiency level will increase to the point where the target transitions to an information processing mode dominated by systematic processing. 

In this mode the target could be expected to take a more deliberative and analytic approach to evaluating the attacker's claims and expose the inevitable flaws and inconsistencies leading to a rejection of the attack. It is not intended that the system directly intervene in the attack by, for example, preemptively terminating the contact.  

Investigating the conditions under which probabilistic inferencing  can be performed, and assessing the potential for probabilistic inferences to be drawn is, to the best of our knowledge, a novel approach which this paper brings to the understanding of social engineering. Furthermore this novel approach is able to draw upon the understanding of human information processing being developed in the information sciences to begin constructing techniques for actively counteracting social engineering. We suggest that this is an example of a potentially fertile area of research applying the characteristics of human information processing in the context of risk to the development of effective countermeasures to social engineering.

Finally in Section \ref{discussion} we propose a potential application of Probabilistic Salience and Theorem \ref{Theorem1} in what we refer to as `de-personalisation' of a data set by analogy with the de-identification techniques used in data privacy.  The objective here is to inhibit probabilistic inference by selectively reducing the data's Probabilistic Salience without excessively compromising its utility. 

\section{Theory}
\subsection{The Multidimensional Contingency Table}
\label{contingency}
The general problem area is that of $N$ dimensional contingency tables \cite{algebraic} formed from an ordered set of categorical variables, i.e. attributes, $C = \{C_{N-1}, \cdots C_{0}\}$. Each variable consists of a set of values or levels  such that, for example, the $k$th variable, $C_k$, comprises the $M_k$ levels $\{a_{k1},a_{k2} \cdots a_{kM_k}\}$.  There are $M_{T}=\prod_{k=1}^{N}M_{k}$ cells in the table, each being identified by an ordered index set 
\[\boldsymbol{\mathfrak{i}} = \langle \mathfrak{i}_{N-1},\mathfrak{i}_{N-2} \cdots \mathfrak{i}_0 \rangle\] 
where $\mathfrak{i}_k \in \{0\cdots M_k-1\}$ so that each attribute $C_k$ is directly associated with the index $\mathfrak{i}_k$. Each cell contains the number of members of the population which exhibit the joint occurrence of the attribute levels. However, for notational convenience in what follows, there will be the same number of levels in each attribute, $M$, so that $M_{T}=M^{N}$. 

We regard the table as a geometric entity, in this case a hypercube of cells. To provide flexibility in associating an $N$ element vector of attribute values with the indices of the corresponding cell, we  adopt the convention of representing the $k$th `axis' of the hypercube by the $N$ element standard basis vector $\bf{e}_k$ in which the $k$th element is `$1$' with the remainder zero. This enables the set of attributes to be referenced by the index matrix 
\begin{equation}
  \mathbf{I} = \left | \mathbf{e}_{N\!-\!1} \cdots \mathbf{e}_0 \right |.   \label{indexM}
\end{equation}
If the levels of the attributes designating the $\ell$th cell are collected in the vector $\mathbf{a}_\ell = \langle a_{\ell_{N\!-\!1}} \cdots \mathbf{a}_{\ell_0}\rangle$, $\mathbf{a}_{\ell_j} \in \{0\cdots M_k-1\}$, the indices of the cell are given by 
\begin{equation}
 \boldsymbol{\mathfrak{i}}_\ell = \mathbf{I}\:\mathbf{a}_{\ell}. \label{indexv}
 \end{equation}
 
 The index matrix construct (\ref{indexM}) also provides a mechanism for dealing with subtables. If $C_{\ell_k} \cdots C_{\ell_1}$ is some subset $\mathbf{\mathsf{c}}^\ell$ of $k$ attributes, the attribute index vector $\mathbf{i}^s_\mathbf{\mathsf{c}}$ is
 \[\mathbf{i}^s_\mathbf{\mathsf{c}} = \langle i_{\ell_k} \cdots i_{\ell_1}\rangle\]
  where there is no implication that the integers $\{\ell_k \cdots \ell_1\}$ are contiguous.  Using $\mathbf{i}_\mathbf{\mathsf{c}}$ as a key provides the subtable index matrix
  \begin{equation}
  \mathbf{I^s} = \left | \mathbf{e}_{\ell_k} \cdots\mathbf{e}_{\ell_1} \right |. \label{indexMs}
  \end{equation}

The contingency table is assumed to be a member of an ensemble of similar tables with the common characteristic that they represent the same population so that their respective entries sum to the population size say $N_T$. More formally this is equivalent to assuming that each table is drawn from a multinomial distribution although we will not make use of this fact. These tables can be represented as vectors by ordering the table indices according to some rule. Here we choose a lexicographic ordering in which the index, $\mathfrak{i}_{0}$ varies the most rapidly followed by $\mathfrak{i}_{1}$ and so on resulting in a table vector, $\mathcal{T}\in \mathfrak{R}^{M_T}$. 

Much of the literature on contingency tables is concerned with the estimation of the cell contents from incomplete data. However, to avoid being distracted by estimation procedures, important though they are, we will assume that the data in the cells is a reliable representation of the population.  The only complication is presented by zero entries in one or more cells but we will take a pragmatic approach here and replace zeros by ones by applying an affine transformation to the table with minimal impact on the conclusions for reasons that will become apparent below. 

\subsection{The Log Linear Transformation}
\label{loglinear}

The log linear transformation is based on the cell by cell logarithmic transformation of a contingency table vector  ${\mathcal T}=[\rho_0, \cdots \rho_{M_T}]^\mathsf{T}$ to generate a new table  $\mathbf{T}=[\gamma_0 \cdots \gamma_{M_T}]^\mathsf{T}$ where $\gamma = \log \rho$. In its general form the log linear transformation is (\cite{DLS})
\begin{equation}
\mathbf{T}(\boldsymbol{\mathfrak{i}})  = \sum_{\mathbf{\mathsf{c}}{\subseteq C}} \xi_{\mathbf{\mathsf{c}}}(\mathbf{i}_{\mathbf{\mathsf{c}}}) = \xi_{0}+\sum_{\ell=1}^{N}\xi_{\ell}+\!\!\sum_{\stackrel{\scriptstyle{\mathsf{c}}\subseteq C}{{\scriptstyle |\mathsf{c}|=2}}} \!\!\xi_{{\mathbf{\mathsf{c}}}}(\mathbf{i}_\mathsf{c})+\!\! \sum_{\stackrel{\scriptstyle{\mathsf{c}}\subseteq C}{{\scriptstyle |\mathsf{c}|=3}}} \!\!\xi_{{\mathbf{\mathsf{c}}}}(\mathbf{i}_\mathsf{c})\cdots   \label{xifunct}
\end{equation}
where the $\xi$ functions are parameterised by a subset of the attributes.  

In what follows it is essential to impose some form of ordering on the set of subsets $\mathbf{\mathsf{c}}$ i.e. on the power set of $C$, $\mathbb{P}(C)$. In the ordering adopted here, the first term has $|\mathbf{\mathsf{c}}| = 0$, i.e. the null set representing the constant term. Next comes the group of subsets with $|\mathbf{\mathsf{c}}|=1$ representing the main effects , then those with $|\mathbf{\mathsf{c}}|=2$  and so on. The terms with $|\mathbf{\mathsf{c}}|\ge 2$ are referred to as `interaction terms' because they represent probabilistic interactions between the members of the subset \cite{DLS}\cite{Koller}, beginning with first order interactions between pairs of attributes. Within each group the subsets are ordered following the lexicographic principle applied to their member's indices. For example where $|{\mathbf{\mathsf{c}}}|=2$, i.e. first order interactions, the ordering is 
\begin{equation}
C_{N\!-\!1}C_{N\!-\!2},\ C_{N\!-\!1}C_{N\!-\!3} \cdots C_{N\!-\!1}C_{0},\ C_{N\!-\!2}C_{N\!-\!3} \cdots C_1C_0.     \label{pairwise}
\end{equation}

Clearly, defining the $\xi$ functions, which are in the form of an $M_{T}$ dimensional vector, is a critical  task in constructing the expansion. The key development on which the process described here is constructed has been introduced by Dahinden et. al. \cite{Dar2} where they express each $\xi$ function as a linear combination of a set of basis vectors which span the subspace containing it. There is one basis vector for each particular combination of the levels associated with a subset of attributes.  The number of such basis vectors, i.e. the dimension of the subspace, is determined by the number of possible configurations of the levels which is $M^{|\mathbf{\mathsf{c}}|}$. Each $\xi$ vector can then be obtained from the data itself by determining the coefficients of the linear combination.

In matrix form, the log-linear model is 
\begin{equation}
 {\bf T} = {\bf X} {\boldsymbol \alpha} \label{loglin}
\end{equation}
where ${\bf X}$ is a design matrix and ${\boldsymbol\alpha}$ is a vector of coefficients of the form
\[ \langle \alpha_0\  \alpha_1\  \alpha_2 \cdots \alpha_N\  \alpha_{11}\  \alpha_{12} \cdots \alpha_{NN}\  \alpha_{111}\  \alpha_{112} \cdots \alpha_{NNN} \cdots\rangle ^\mathsf{T}  \]

 The basis vectors of a particular $\xi$ function then form a  subset of the column vectors of $X$ i.e. a submatrix of $X$ where each submatrix represents one of the subsets, $\mathsf{c}$, of attributes involved in a particular level of interaction. Dahinden et. al. label the submatrices so generated as $ X^{C_{k1},C_{k2}\cdots C_{k|{\boldsymbol{\mathsf{c}}}|}} = X^{\mathsf{c}_k}$ where the elements of each submatrix are either zero or one. The algorithm for generating the basis vectors is given in \cite{Dar2} but the essential idea is described below. However, because the detailed form of the design matrix is a critical part of our development, we examine its construction in more detail in Section \ref{form}.

To proceed, it is necessary to introduce some additional notation. Firstly let $k$ be the interaction index, $k=|\boldsymbol{\mathsf{c}}|$, so that $k=0$ indexes the constant term in (\ref{xifunct}), $k=1$ the main effects and so on. For each value of $k$, the number of submatrices of the form $X^{\boldsymbol{\mathsf{c}}_{i}}$ is $N_{k}=\binom{N}{k}$ and the number of columns in each submatrix is $M^{k}$, one column for each of the possible combinations of levels in the set of $k$ attributes. Keep in mind that each cell in the table ${\mathcal T}$ is denoted by a specific index vector $\boldsymbol{\mathfrak{i}} = \langle \mathfrak{i}_{N-1} \cdots \mathfrak{i}_{j} \cdots \mathfrak{i}_0\rangle, \;\mathfrak{i}_{j}\in \{0\cdots M-1\}$ so that the $r$th component of the table vector $\mathbf{T}$ is denoted by
\begin{equation}
r = \sum_{j=0}^{(N-1)}\mathfrak{i}_jM^{j} = (\mathbf{R})_M.         \label{lex}
\end{equation}
where $(\mathbf{R})_M$ is the $M$-ary number $\boldsymbol{\mathfrak{i}}$ expressed in radix $M$ form.  Note that this is simply the operation of counting through the elements of the table vector with radix $M$.

 Consider one specific value of $k$ and one specific submatrix, $X^{\boldsymbol{\mathsf{c}}^{\ell}_{k}}$, where $\boldsymbol{\mathsf{c}}^{\ell}_{k}\in \boldsymbol{\mathsf{c}}_{k} \subseteq \mathbb{P}(C)\  s.t. \ |\boldsymbol{\mathsf{c}}_{k}| = k$ and $\ell \in \boldsymbol{\ell}=\{1 \cdots\binom{N}{k}\}$ which specifies the subsets $\boldsymbol{\mathsf{c}}^\ell_k$ of $\boldsymbol{\mathsf{c}}_k$. For each value of $\ell$, the attributes are designated by the ordered vector of attribute numbers $\mathbf{i}^{s_\ell}_k=  \langle i_{\ell_k}\cdots i_{\ell_1}\rangle $ so that e.g. $C_{i_{\ell_k}}$ is the leftmost attribute in the subset. Then each column of the submatrix represents one particular combination of the levels of the subset of attributes designated by $\mathbf{i}^{s_\ell}_k$. Note that there is no implication that the $k$ indices are contiguous which, in general, they are not as indicated by the example (\ref{pairwise}). 
 
 If the levels of the attributes are $a \in \{0 \cdots M-1\}$, the vector $\mathbf{a}^{k,\ell} = \langle a_{\ell_k} \cdots a_{\ell_1} \rangle$, represents one combination of values of the attributes  $\boldsymbol{\mathsf{c}}^\ell_k$. The $\nu$th column of the submatrix is then associated with a particular $\mathbf{a}^{k,\ell}_j$ as well as $\mathbf{i}^{s_\ell}_k $. The remaining ${(N-k)}$ table axes form the vector $\mathbf{i}^{g_\ell}_k$ with corresponding attributes vector $\mathbf{g}^{k,\ell} = \langle g_{\ell_{N-k}} \cdots g_{\ell_1}\rangle$. Using (\ref{lex}),  the elements of the $\nu$th column vector having $\mathbf{i}^{s_\ell}_k $ and $\mathbf{a}^{k,\ell}_j$ in common, $\{x\}_g$, are designated by the table index vector, from (\ref{indexv}) and (\ref{indexMs})
 \begin{equation}
 \boldsymbol{\mathfrak{i}}_j = \mathbf{I}^{s_\ell}_ k\mathbf{a}^{k,\ell}_j + \mathbf{I}^{g_\ell}_{N-k} \mathbf{g}^{k,\ell}   \label{column_index}
 \end{equation} 
 where the elements of $\mathbf{g}^{k,\ell}$ range over all $M^{N-k}$ combinations of the attribute levels.
 
To define the basis vector recognise that the available information does not allow the elements of $\{x\}_g$ to be distinguished. Consequently the $\nu$th column vector becomes a basis vector by setting the elements of $\{x\}_g$ to unity with the remaining elements zero.

The lexicographic ordering is a bijective mapping, so that if there is a one at a particular position in the $\nu$th column of the submatrix $X^{\boldsymbol{\mathsf{c}}^{\ell}_{k}}$, there cannot be a one in the corresponding position in any of the remaining $M^{k}-1$ columns. Consequently, the columns of the submatrix in question, i.e. the basis vectors, are orthogonal. Furthermore, because the total number of ones in the submatrix is $M^{k}M^{N-k}=M^{N}$, the sum $\sum_{j=1}^{M^{k}} {X_j^{\boldsymbol{\mathsf{c}}^{\ell}_{k}}}$ is equal to the unit vector $\widehat{X}^{0}$.

The full matrix then has 
\[1+NM + \cdots\binom{N}{k}M^{k} \cdots +M^{N} = (1+M)^{N}>M^{N}\]
 columns so that (\ref{loglin}) is very underdetermined. Conventionally the approach to this problem is to impose `identifiability' by removing the null space of $X$ using, for example, the Moore-Penrose pseudo inverse. However this approach results in losing the explicit connection between the coefficient vector and the interaction terms. 
 
 Fortunately Darhinden et. al., \cite{Dar2}, have derived an orthogonal form of $X$ which leads to a coefficient vector that can be related to the interaction terms in (\ref{xifunct}). The essential idea is as follows. Using a modified version of the notation in \cite{Dar2}, the vector space, $\widehat{X}^{0}$ spanned by the constant vector $k=0$ has dimension 1. Then the vector space $\widehat{X}^{C_{\ell}}$, spanned by the columns of $X^{C_{\ell}}$, is composed of $\widehat{X}^{C_{\ell}}_{-} = \widehat{X}^{0}$ and its orthogonal complement, $\widehat{X}^{C_{\ell}}_{\perp}$, in $\widehat{X}^{C_{\ell}}$ the orthogonal complement having dimension $M-1$. The space, $\widehat{X}^{C_{\ell_{1}}C_{\ell_{2}}}$, spanned by $X^{C_{\ell_{1}}C_{\ell_{2}}}$ is composed of ${\widehat X}^{C_{\ell_{1}}C_{\ell_{2}}}_{-} = \{\widehat{X}^{C_{\ell_{1}}},\widehat{X}^{C_{\ell_{2}}}\}$ and its orthogonal complement, ${\widehat X}^{C_{\ell_{1}}C_{\ell_{2}}}_{\perp}$ in ${\widehat X}^{C_{\ell_{1}}C_{\ell_{2}}}$ which has dimension $M^{2} - 2(M-1)-1  = (M-1)^{2}$.

For the $k$th interaction index, there are $\binom{N}{k}$ spaces of the form $\widehat{X}^{C_{\ell_{1}}\cdots C_{\ell_{k}}}$ where each of the $\binom{N}{k}$ sets $\{\ell_1 \cdots \ell_k\}$ is a distinct subset of the index set $\{1\cdots N\}$. The space spanned by the columns of $X^{\boldsymbol{\mathsf{c}}^{\ell}_{k}}$, $\widehat{X}^{\boldsymbol{\mathsf{c}}^{\ell}_{k}}$, is composed of the subspace 
\[ \widehat{X}^{\boldsymbol{\mathsf{c}}^{\ell}_{k}}_{-} = \{\widehat{X}^{C_{\ell_{1}}\cdots C_{\ell_{k-1}}},\widehat{X}^{C_{\ell_{1}}\cdots C_{\ell_{k-2}}C_{\ell_{k}}},\cdots \widehat{X}^{C_{\ell_{2}}\cdots C_{\ell_{k}}}\}\]
 and its orthogonal complement in $\widehat{X}^{C_{\ell_{1}}\cdots C_{\ell_{k}}}$, $\widehat{X}^{\boldsymbol{\mathsf{c}}^{\ell}_{k}}_{\perp}$. This orthogonal complement has dimension
 \begin{equation} 
 M^{k} - \left(1+ \binom{k}{1}(M-1) +  \cdots + \binom{k}{k-1}(M-1)^{k-1}\right) = (M-1)^{k}.  \label{OCdim}
\end{equation}

Beginning with the constant vector, the orthogonalised form of $\bf{X}$, ${\overline{\bf X}}$, is derived by successively projecting the columns of the submatrix $X^{\boldsymbol{\mathsf{c}}^{\ell}_{k}}$ onto the orthogonal complement $\widehat{X}^{\boldsymbol{\mathsf{c}}^{\ell}_{k}}_{\perp}$. Then ${\overline{\bf X}}$ is constructed from the constant vector and the columns of the submatrices following the ordering described above. Consequently the number of columns is
\begin{equation}
1 + N(M-1)  + \cdots +\binom{N}{k}(M-1)^{k} + \cdots  + (M-1)^{N} = M^{N} \label{matrix_size}
\end{equation}
so that ${\overline{\bf X}}$ is a square matrix.

\subsubsection{Constructing the $\xi$ Functions}
\label{xi}
In general, if the orthogonalised matrix is ${\overline{\bf X}}$ then the log-linear model becomes, from (\ref{loglin}),
\begin{equation}
{\bf T}  = {\overline{\bf X}}{\boldsymbol{\beta}}   \label{mloglin}
\end{equation}
where ${\boldsymbol{\beta}} $ is the modified vector of structural parameters.  The vector ${\bf T}$ therefore exists in a space 
\begin{equation}
{\widehat{\overline X}} = \widehat{X}^{0} + \sum_{k=1}^{N} \sum_{\ell = 1}^{\binom{N}{k}} \widehat{X}^{\boldsymbol{\mathsf{c}}^{\ell}_{k}}_{\perp}
\end{equation}
where $\boldsymbol{\mathsf{c}}^{\ell}_{k}$ is the $\ell$th subset of attributes having cardinality $k$ in the lexicographic ordering specified above.

Unfortunately, whereas the elements of ${\boldsymbol{\alpha}}$ in (\ref{loglin}) are associated directly with the corresponding interaction terms in ${\bf X}$ those in ${\boldsymbol{\beta}}$ are not. For example, the elements of ${\boldsymbol{\alpha}}$ corresponding to the first order interactions describe the structure of the interaction graph, while those associated with the first order interactions in ${\boldsymbol{\beta}}$ do not, at least not directly. This is because the coefficients associated with the $\overline{X}^{\boldsymbol{\mathsf{c}}^{\ell}_{2}}$ sub matrix of ${\overline{\bf X}}$ in (\ref{mloglin}) are the coefficients of the basis functions spanning the subspace $\widehat{X}^{\boldsymbol{\mathsf{c}}^{\ell}_{2}}_\perp$. 

However, because the matrix $\overline{\bf X}$ is partitioned into the submatrices, $\overline{X}^{\boldsymbol{\mathsf{c}}^{\ell}_{k}}$, whose columns are the orthogonal basis vectors of the subspaces $\widehat{X}^{\boldsymbol{\mathsf{c}}^{\ell}_{k}}_{\perp}$, the vector $\boldsymbol{\beta}$ can be partitioned into subvectors as
\begin{equation} 
\boldsymbol{\beta} = (\beta_{0} \; \boldsymbol{\beta}_{11} \cdots \boldsymbol{\beta}_{1N} \cdots\ \cdots \boldsymbol{\beta}_{k1} \cdots \boldsymbol{\beta}_{k\binom{N}{k}} \cdots\  \cdots) \label{bvector}
\end{equation}
with $\boldsymbol{\beta}_{k\ell} = (\beta_{k\ell 1}\; \cdots\; \beta_{k\ell (M-1)^{k}}).$
Then (\ref{mloglin}) becomes
\begin{equation}
{\bf T} = {\overline X}^{0}\beta_{0 } + \sum_{k=1}^{N}\sum_{\ell=1}^{\binom{N}{k}} \overline{X}^{\boldsymbol{\mathsf{c}}^{\ell}_{k}}\boldsymbol{\beta}_{k\ell}.    \label{orthexp}
\end{equation}

Now each term of the outer summation in (\ref{orthexp}) does refer to a specific interaction, namely, the interaction between the members of $\boldsymbol{\mathsf{c}}^{\ell}_{k}$.

To connect with the $\xi$ functions of (\ref{xifunct}) recognise that a vector $\boldsymbol{\chi}^\perp_{k\ell}$ representing the function $\xi_{\boldsymbol{\mathsf{c_{k}^{\ell}}}}:|\boldsymbol{\mathsf{c_{k}^{\ell}}}|=k,\ell \in \{1 \cdots \binom{N}{k}\}$ exists in the subspace spanned by the basis vectors forming the submatrix ${\overline X}^{\boldsymbol{\mathsf{c}}_{k}^\ell}$ of ${\overline{\bf X}}$ in (\ref{mloglin}).  Furthermore, the subvector ${\boldsymbol \beta_{kl}}$ is associated with $\xi_{\boldsymbol{\mathsf{c_{k}^{\ell}}}}$. In other words the function $\xi_{\boldsymbol{\mathsf{c_{k}^{\ell}}}}$ is represented by the linear combination
\begin{equation}
\boldsymbol{\chi}^\perp_{k\ell} = {\overline X}^{\boldsymbol{\mathsf{c}}_{k}^\ell}\boldsymbol{\beta}_{k\ell}.  \label{chivector}
\end{equation}
Because the underlying subspaces are orthogonal, $\boldsymbol{\chi}^\perp_{k\ell}$ is orthogonal to the vectors representing the other $\xi$ functions.  

Exploiting the fact that $\overline{{\bf X}}$ is an orthogonal matrix, the vector $\boldsymbol{\beta}$ is obtained by premultiplying ({\ref{mloglin}}) by $\overline{\bf X}^{\mathsf{T}}$ to give
\[\overline{\bf X}^{\mathsf{T}}{\bf T}  = \overline{\bf X}^{\mathsf{T}}{\overline{\bf X}}{\boldsymbol{\beta}}  = D_{\overline{\bf X}^{\mathsf{T}}}{\boldsymbol{\beta}} \]
where $D_{\overline{\bf X}^{\mathsf{T}}}$ is a diagonal matrix containing the squared magnitudes of the basis vectors and is readily inverted. Then
\begin{equation}
\boldsymbol{\beta} = D^{-1}_{\overline{\bf X}^{\mathsf{T}}}\overline{\bf X}^{\mathsf{T}}{\bf T} \label{betaV}
\end{equation}
and
\begin{equation}
\boldsymbol{\beta}_{k\ell} = D^{-1}_{\boldsymbol{\mathsf{c}}_{k}^{\ell}}{\overline X}^{{\boldsymbol{\mathsf{c}}_{k}^{\ell}}^{\mathsf{T}}}\bf{T} \label{betasubV}
\end{equation}
where $D_{\boldsymbol{\mathsf{c}}_{k}^{\ell}}$ is a diagonal matrix containing the squared magnitudes of the basis vectors in the subspace $\widehat{X}^{\boldsymbol{\mathsf{c}}^{\ell}_{k}}_{\perp}$.
From (\ref{chivector})
\begin{equation}
\boldsymbol{\chi}^\perp_{k\ell}= {\overline X}^{\boldsymbol{\mathsf{c}}_{k}^{\ell}} D^{-1}_{\boldsymbol{\mathsf{c}}_{k}^{\ell}}{\overline X}^{{\boldsymbol{\mathsf{c}}_{k}^{\ell}}^{\mathsf{T}}} \bf{T}  \label{projector}
\end{equation}
so that
\begin{eqnarray}
|\boldsymbol{\chi}^\perp_{k\ell}|^{2} & = &{\bf T}^{\mathsf{T}}{\overline X}^{\boldsymbol{\mathsf{c}}_{k}^\ell} D^{-1}_{\boldsymbol{\mathsf{c}}_{k}^{\ell}} {\overline X}^{{\boldsymbol{\mathsf{c}}_{k}^{\ell}}^{\mathsf{T}}} {\overline X}^{\boldsymbol{\mathsf{c}}_{k}^\ell}  D^{-1}_{\boldsymbol{\mathsf{c}}_{k}^{\ell}}{\overline X}^{{\boldsymbol{\mathsf{c}}_{k}^{\ell}}^{\mathsf{T}}}\bf{T}   \nonumber \\
& = & {\bf T}^{^{\mathsf{T}}}{\overline X}^{\boldsymbol{\mathsf{c}}_{k}^\ell} D^{-1}_{\boldsymbol{\mathsf{c}}_{k}^{\ell}}{\overline X}^{{\boldsymbol{\mathsf{c}}_{k}^{\ell}}^{\mathsf{T}}}\bf{T}      \label{magsq}
\end{eqnarray}
because of orthogonality. This shows that the matrix in (\ref{projector} is self adjoint and idempotent so that it is in fact a projector onto the subspace $\widehat{X}^{\boldsymbol{\mathsf{c}}^{\ell}_{k}}_{\perp}$. Consequently the magnitude of the function $\xi_{\boldsymbol{\mathsf{c_{k}^{\ell}}}}$  is the magnitude of the projection of $\bf{T}$ onto the subspace $\widehat{X}^{\boldsymbol{\mathsf{c}}^{\ell}_{k}}_{\perp}$ i.e. the subspace in which the vector representation of the function $\xi_{\boldsymbol{\mathsf{c}_{k}^{\ell}}}$ exists. In a sense it describes the magnitude of the contribution the population data in the table  makes to the interaction represented by $\xi_{\boldsymbol{\mathsf{c}_{k}^{\ell}}}$.

\subsection{The Structure of the Design Matrix}
\label{form}

To explore the overall form of the design matrix $\overline{X}$ we return to the construction of the original form $X$ and the discussion following (\ref{lex}). The columns of the submatrix $X^{C_{i_k},\cdots C_{i_1}}$ are numbered left to right by the radix $M$ number $\nu = (a_k, \cdots a_1)_M$ i.e. in the first column of the submatrix, $a_k=0\cdots a_1=0$ whereas in the second column $a_k=0\cdots a_2=0$ and $a_1=1$.  There are therefore $M^k$ columns in the submatrix where each is generated from the preceding column by counting in radix $M$.
\begin{lemma} 
Column $\nu = a_kM^{k-1}\cdots +a_2M+a_1$ of the submatrix $X^{C_{i_k}\cdots C_{i_1}}$ has the form
 \begin{equation}
 \begin{array}{ll}
 \mathtt{X}^{i_k \cdots i_1}(a_k\cdots a_1) = \\
 \left[\underbrace{\overbrace{0\cdots0}^{a_kM^{i_k}}\underbrace{\overbrace{0\cdots0}^{a_{k\!-\!1}M^{i_{k\!-\!1}}}\cdots\underbrace{\overbrace{0\cdots 0}^{a_1M^{i_1}}\overbrace{1\cdots1}^{M^{i_1}}\overbrace{0\ \cdots\ 0}^{(M\!-\!1\!-\!a_1)M^{i_1}}}_{\times M^{i_2-i_{1}-1}}\cdots\overbrace{0\ \cdots\ 0}^{(M\!-\!1\!-\!a_{k\!-\!1})M^{i_{k\!-\!1}}}}_{\times M^{i_k-i_{k\!-\!1}-1}}\overbrace{0\ \cdots\ 0}^{(M\!-\!1\!-\!a_k)M^{i_k}}}_{\times M^{N-i_k-1}} \right]^\mathsf{T}. \label{original_column}
 \end{array}
 \end{equation}
 \label{lemma1}
 \end{lemma}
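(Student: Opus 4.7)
The plan is to prove Lemma~\ref{lemma1} by induction on $k$, the number of attributes in the subset. Before invoking induction, I would first reinterpret \eqref{original_column} as an indicator string: by the construction of the submatrix recalled just above the lemma (the sentence following \eqref{column_index}), the $\nu$th column of $X^{C_{i_k}\cdots C_{i_1}}$ has a $1$ in row $r$ precisely when the lexicographic index $\boldsymbol{\mathfrak{i}}$ determined by $r=(\mathbf{R})_M$ in \eqref{lex} satisfies $\mathfrak{i}_{i_j}=a_j$ for every $j=1,\ldots,k$, and $0$ otherwise. The task then reduces to showing that the nested $0/1$ string in \eqref{original_column}, read from left (small $r$) to right (large $r$), is exactly this indicator as $r$ ranges over $0,1,\ldots,M^N-1$.

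The base case $k=1$ follows immediately from the lex ordering: $\mathfrak{i}_{i_1}=a_1$ iff $\lfloor r/M^{i_1}\rfloor \bmod M = a_1$, so the selected rows form an interval of $M^{i_1}$ consecutive positions starting at offset $a_1 M^{i_1}$ within each period of length $M^{i_1+1}$, with $M^{N-i_1-1}$ periods in total, matching \eqref{original_column}. For the inductive step I would assume the formula for $k-1$ and, given $i_1<\cdots<i_k$, partition the $M^N$ rows into $M^{N-i_k-1}$ consecutive super-blocks of length $M^{i_k+1}$. Within each super-block the value of $\mathfrak{i}_{i_k}$ steps through $0,\ldots,M-1$, each value occupying a contiguous sub-block of length $M^{i_k}$, so the constraint $\mathfrak{i}_{i_k}=a_k$ selects the single sub-block preceded by $a_k M^{i_k}$ zeros and followed by $(M-1-a_k)M^{i_k}$ zeros, which matches the outermost layer of \eqref{original_column}. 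Inside this selected sub-block of length $M^{i_k}$, the remaining indices $(\mathfrak{i}_0,\ldots,\mathfrak{i}_{i_k-1})$ run through all $M^{i_k}$ combinations in lex order subject to $\mathfrak{i}_{i_j}=a_j$ for $j<k$; this is exactly the lemma's setting for the smaller subset $C_{i_{k-1}},\ldots,C_{i_1}$ in a table of dimension $i_k$ rather than $N$, so the induction hypothesis with $N$ replaced by $i_k$ supplies precisely the nested $(k-1)$-level string that fits inside the outer layer just constructed.

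The main obstacle is bookkeeping rather than ideas: I would verify that at each nesting level the block lengths sum to the enclosing block ($a_j M^{i_j}+M^{i_j}+(M-1-a_j)M^{i_j}=M^{i_j+1}$), that each repetition factor multiplies the inner block length to give the next level ($M^{i_{j+1}-i_j-1}\cdot M^{i_j+1}=M^{i_{j+1}}$), and that the degenerate cases are absorbed consistently --- for instance, $a_j=0$ empties a leading zero block, $a_j=M-1$ empties a trailing one, $i_1=0$ empties the innermost leading zero block, and $i_{j+1}-i_j=1$ collapses a repetition factor to $M^0=1$ (no repetition). These are all direct arithmetic checks on the exponents and do not disturb the structure of the inductive argument.
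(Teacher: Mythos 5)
Your proposal is correct, and it reaches \eqref{original_column} by a route that is organised differently from the paper's. Both arguments rest on the same elementary fact --- under the lexicographic indexing \eqref{lex}, the digit $\mathfrak{i}_{i_j}$ of the row counter $r$ equals $a_j$ exactly on intervals of length $M^{i_j}$ at offset $a_jM^{i_j}$ inside each period of length $M^{i_j+1}$ --- but the paper's proof is a direct, inside-out derivation: it walks the counter from $r=0$, locates the first $1$ at $r=a_kM^{i_k}+\cdots+a_1M^{i_1}$, identifies the run of $M^{i_1}$ ones, and then asserts the repetition counts $M^{i_2-i_1-1},\ldots,M^{N-i_k-1}$ level by level as successive digits roll over. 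You instead make the indicator-vector characterisation explicit up front and proceed outside-in by induction on $k$, peeling off the outermost constraint $\mathfrak{i}_{i_k}=a_k$ to select one sub-block of length $M^{i_k}$ per super-block and then invoking the hypothesis with $N$ replaced by $i_k$ to supply the nested interior. The induction buys a cleaner justification of exactly the step the paper treats most informally --- that the first-level pattern repeats precisely $M^{i_2-i_1-1}$ times before being capped by the next layer of zeros --- since in your version that count appears automatically as the outer repetition factor of the $(k-1)$-attribute instance in dimension $i_k$; the paper's scan, in exchange, makes the location of the leading $1$ and the total count of leading zeros more immediately visible. Your closing bookkeeping checks (block lengths summing to $M^{i_j+1}$, degenerate cases $a_j=0$, $a_j=M-1$, $i_{j+1}-i_j=1$, $i_1=0$) are exactly the right things to verify and present no obstacle.
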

 
 \begin{proof}
The column is initially all zeros, the table index is $\boldsymbol{\mathfrak{i}}=\langle 0 \cdots 0\rangle$ and the index of the table vector $\mathcal{T}$ is $r\!=\!(0 \cdots 0)_M\!=\!0$. Now begin incrementing the table index in a counting operation. A count of $a_1M^{i_1}$ results in $\mathfrak{i}_{i_1} = a_1$, a  count of $a_2M^{i_2}$ sets $\mathfrak{i}_{i_2}=a_2$ and so on until a final count of $a_kM^{k}$ will set $\mathfrak{i}_{i_k}=a_k$.  At this point the table index $\boldsymbol{\mathfrak{i}}$ points to the first cell where the attributes in $\boldsymbol{\mathsf{c}}^\ell_k$ have the values given by $\langle a_k \cdots a_1 \rangle$ and  a one is  entered in row $r=a_kM^{i_k}+\cdots +a_1M^{i_1}$ (remembering that the row numbering starts at zero) of column $\nu$. There are then a total of $r$ zeros preceding this one in the column.
 
 A further count of $M^{i_1}$ is required before $\mathfrak{i}_{i_1}$ goes to $(a_1+1)_M$ so that there are $M^{i_1}$ consecutive ones in the column. This is followed by a sequence of $(M-1-a_1)M^{i_1}$ zeros before all of the $(\mathfrak{i}_{i_1} \cdots \mathfrak{i}_0)$ indices are reset to zero. This whole first level pattern is then repeated until $\mathfrak{i}_{i_2}$ goes to $(a_2+1)_M$ i.e. the pattern of $a_1M^{i_1}$ zeros, $M^{i_1}$ ones and $(M\!-\!1\!-\!a_1)M^{i_1}$ zeros occurs $M^{(i_2-i_1-1)}$ times. 
 
 This set of repeated first level patterns extends the $a_2M^{i_2}$ zeros from the initial count and this second level pattern is completed by the sequence of $(M\!-\!1\!-\!a_2)M^{i_2}$ zeros needed to set all of the $( \mathfrak{i}_{\mathfrak{i}_2} \cdots \mathfrak{i}_0)$ indices to zero. 
 
 Continuing the count keeps repeating the second level pattern until $\mathfrak{i}_{i_3} = a_3$ so that there are $M^{i_3-i_2-1}$ repetitions of the second level pattern. A third level pattern is then completed by the sequence of $(M\!-\!1\!-\!a_3)M^{i_3}$ zeros needed to set all of the $(\mathfrak{i}_{i_3} \cdots \mathfrak{i}_0)$ indices to zero. Finally there is a $k$th level pattern with $M^{N\!-\! i_k \!-1\!}$ repetitions of the $(k\!-\!1)$th level pattern until the count is complete.  
\end{proof}

Equation (\ref{original_column}) can be expressed in the alternative form
  \begin{equation}
 \begin{array}{ll}
 \mathtt{X}^{i_k \cdots i_1}(a_k\cdots a_1) = \frac{1}{M^k} \: \huge{\cdot} \\ 
 \!\!\!\!\left[\begin{array}{ll}\underbrace{\:\cdots \:\underbrace{\overbrace{0}^{a_2M^{i_1}}\overbrace{\!0}^{M^{i_1}}\overbrace{\;\;\;\;\;0\;\;\;\;\;}^{(M\!-\!1\!-\!a_2)M^{i_1}}}_{\times a_2M^{i_2-i_1-1}}\underbrace{\overbrace{0}^{a_1M^{i_1}}\overbrace{1}^{M^{i_1}}\overbrace{\;\;\;\;\;0\;\;\;\;\;}^{(M\!-\!1\!-\!a_1)M^{i_1}}}_{\times M^{i_2-i_1-1}}\underbrace{\overbrace{0}^{a_2M^{i_1}}\overbrace{\!0}^{M^{i_1}}\overbrace{\;\;\;\;\;0\;\;\;\;\;}^{(M\!-\!1\!-\!a_2)M^{i_1}}}_{\times(M-1- a_2)M^{i_2-i_1-1}}\cdots}_{\times M^{N-i_k-1}}\end{array} \right]^\mathsf{T}\!\!\!\!. \label{column_form}
 \end{array}
 \end{equation}

Now consider the orthogonalisation process. With $\mathtt{X}^{i}$ denoting a column of the submatrix $X^{C_i}$, the first step obtains the projection of the column vector onto the orthogonal complement of the normalised constant vector $\overline{X}^0$ as 
\begin{equation}
\mathtt{X}^{i}_\perp = \mathtt{X}^{i} - (\overline{X}^0\cdot\mathtt{X}^{i})\overline{X}^0.   \label{orthog}
\end{equation}

Applying (\ref{orthog}) to the form (\ref{column_form}) gives
  \begin{equation}
 \begin{array}{ll}
 \mathtt{X}^{i_k \cdots i_1}_\perp(a_k\cdots a_1) = \frac{1}{M^k} \: \huge{\cdot} \\ 
 \left[\begin{array}{ll}\underbrace{\:\cdots \:\underbrace{\overbrace{-\!1}^{a_2M^{i_1}}\overbrace{\!-\!1}^{M^{i_1}}\overbrace{\;\;\;\;\;-\!1\;\;\;\;\;}^{(M\!-\!1\!-\!a_2)M^{i_1}}}_{\times a_2M^{i_2-i_1-1}}\underbrace{\overbrace{-\!1}^{a_1M^{i_1}}\overbrace{M^k\!\!-\!\!1}^{M^{i_1}}\overbrace{\;\;\;\;\;-\!1\;\;\;\;\;}^{(M\!-\!1\!-\!a_1)M^{i_1}}}_{\times M^{i_2-i_1-1}}\underbrace{\overbrace{-\!1}^{a_2M^{i_1}}\overbrace{\!-\!1}^{M^{i_1}}\overbrace{\;\;\;\;\;-\!1\;\;\;\;\;}^{(M\!-\!1\!-\!a_2)M^{i_1}}}_{\times(M-1- a_2)M^{i_2-i_1-1}}\cdots}_{\times M^{N-i_k-1}}\end{array} \right]^\mathsf{T}\!\!\!\!. \label{orthog_form}
 \end{array}
 \end{equation}

Note that because (\ref{orthog}) operates on rows only and the projection is onto the uniform vector, the hierarchical sequence of patterns in (\ref{column_form}) is preserved in (\ref{orthog_form}). 
 
Using the case $k=2$ with $a_1=a_2=0$ to illustrate the process, the orthogonalisation represented by (\ref{orthog}) to (\ref{orthog_form}) results in $\mathtt{X}^{i_2i_1}_\perp$. Then, the projection of this onto the orthogonal complement of the subspace formed by $\overline{\mathtt{X}}^{i_2}_\perp$ and $X^0$ is obtained by again applying (\ref{orthog}) resulting in $\mathtt{X}^{i_2i_1}_{\perp\perp}$. Finally, projecting this onto the orthogonal complement of the subspace formed by $\overline{\mathtt{X}}^{i_1}_\perp$, $\overline{\mathtt{X}}^{i_2}_\perp$ and $X^0$ yields
\begin{equation}
\begin{array}{ll}
\mathtt{X}^{i_2i_1}_{\perp\perp\perp}(0,0) = \frac{1}{M^2} \boldsymbol{\cdot} \left[\begin{array}{ll}\underbrace{\underbrace{\overbrace{(M\!\!-\!\!1)^2}^{M^{i_1}}\overbrace{-\!(M\!\!-\!\!1)}^{(M\!-\!1)M^{i_1}}}_{\times M^{i_2-i_1-1}}\underbrace{\overbrace{-\!(M\!\!-\!\!1)}^{M^{i_1}}\overbrace{1\cdots1}^{(M\!-\!1)M^{i_1}}}_{\times (M\!-\!1)M^{i_2-i_1-1}}}_{\times M^{(N-i_2-1)}}\end{array}\right]^\mathsf{T}\!\!. \label{orthog2}
\end{array}
\end{equation}
The form of (\ref{orthog2}) and numerical calculations suggest the $k$ attribute generalisation 
\begin{equation}
\begin{array}{ll}
\mathtt{X}^{i_k\cdots i_2 i_1}_{\perp\cdots\perp}(0\cdots 0) = \frac{1}{M^k}  \boldsymbol{\cdot} \\ \\
\!\!\!\!\!\!\!\!\left[
\begin{array}{ll}\underbrace{\underbrace{\overbrace{(M\!\!-\!\!1)^{k}}^{\times M^{i_1}}\;\;\overbrace{-\!(M\!\!-\!\!1)^{k-1}}^{\times (M\!-\!1)M^{i_1}}\;\;}_{\times M^{i_2-i_1-1}}\underbrace{\overbrace{-\!(M\!\!-\!\!1)^{k-1}}^{\times M^{i_1}}\;\;\overbrace{(M\!\!-\!\!1)^{k-2}}^{\times (M\!-\!1)M^{i_1}}}_{\times (M\!-\!1)M^{i_2-i_1-1}}}_{\times M^{(I_3-i_2-1)}} \\
\underbrace{\qquad\underbrace{\underbrace{\overbrace{-(M\!\!-\!\!1)^{k-1}}^{\times M^{i_1}}\;\;\overbrace{\!(M\!\!-\!\!1)^{k-2}}^{\times (M\!-\!1)M^{i_1}}\;\;}_{\times M^{i_2-i_1-1}}\underbrace{\overbrace{\!(M\!\!-\!\!1)^{k-2}}^{\times M^{i_1}}\;\;\overbrace{-(M\!\!-\!\!1)^{k-3}}^{\times (M\!-\!1)M^{i_1}}}_{\times (M\!-\!1)M^{i_2-i_1-1}}}_{\times (M\!-\!1) M^{(I_3-i_2-1)}}}_{\times M^{(i_4-i_3-1)}} \\
\qquad\qquad\underbrace{\underbrace{\overbrace{-(M\!\!-\!\!1)^{k-1}}^{\times M^{i_1}}\;\;\overbrace{\!(M\!\!-\!\!1)^{k-2}}^{\times (M\!-\!1)M^{i_1}}\;\;}_{\times M^{i_2-i_1-1}}\underbrace{\overbrace{\!(M\!\!-\!\!1)^{k-2}}^{\times M^{i_1}}\;\;\overbrace{-(M\!\!-\!\!1)^{k-3}}^{\times (M\!-\!1)M^{i_1}}}_{\times (M\!-\!1)M^{i_2-i_1-1}}}_{\times M^{(I_3-i_2-1)}} \\
\underbrace{\qquad\qquad\underbrace{\qquad\underbrace{\underbrace{\overbrace{(M\!\!-\!\!1)^{k-2}}^{\times M^{i_1}}\;\;\overbrace{\!-(M\!\!-\!\!1)^{k-3}}^{\times (M\!-\!1)M^{i_1}}\;\;}_{\times M^{i_2-i_1-1}}\underbrace{\overbrace{\!-(M\!\!-\!\!1)^{k-3}}^{\times M^{i_1}}\;\;\overbrace{(M\!\!-\!\!1)^{k-4}}^{\times (M\!-\!1)M^{i_1}}}_{\times (M\!-\!1)M^{i_2-i_1-1}}}_{\times (M\!-\!1) M^{(I_3-i_2-1)}}}_{\times (M-1)M^{(i_4-i_3-1)}}\;\;\mathbf{\cdots}}_{\times M^{N-i_k-1}}
\end{array}\right]^\mathsf{T}. \label{orthogk}
\end{array}
\end{equation}

\begin{lemma}
 Column vector $\mathtt{X}^{i_k\cdots i_2 i_1}_{\perp\cdots\perp}(0\cdots 0)$ of the submatrix ${\overline X}^{\boldsymbol{\mathsf{c}}_{k}^\ell}$  is given by (\ref{orthogk}).
 \label{lemma2}
\end{lemma}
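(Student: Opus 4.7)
The plan is to prove (\ref{orthogk}) by a direct tensor-product construction. The lexicographic ordering (\ref{lex}) identifies $\mathbb{R}^{M^N}$ with $\bigotimes_{j=N-1}^{0}\mathbb{R}^M$, and under this identification the orthogonalised column vector $\mathtt{X}^{i_k\cdots i_1}_{\perp\cdots\perp}(0\cdots 0)$ will turn out to be a single Kronecker product whose unfolding reproduces the nested-block pattern on the right-hand side of (\ref{orthogk}).

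First I would write the unorthogonalised column from Lemma \ref{lemma1} (at $a_1=\cdots=a_k=0$) as $\bigotimes_{j=N-1}^{0} u_j$, where $u_j = e_0$ (the first standard basis vector of $\mathbb{R}^M$) for each interaction axis $j\in\{i_1,\ldots,i_k\}$ and $u_j = \mathbf{1}_M$ (the all-ones vector) for every other $j$. Using the identity $e_0 = \tfrac{1}{M}(\mathbf{1}_M + w)$ with $w=(M\!-\!1,-1,\ldots,-1)^{\mathsf{T}}$, and expanding the $k$ interaction-axis factors, yields $2^k$ tensor-product terms indexed by subsets $S\subseteq\{i_1,\ldots,i_k\}$, each of the form $\tfrac{1}{M^k}\bigotimes_j v_j^{(S)}$ with $v_j^{(S)}=w$ for $j\in S$ and $v_j^{(S)}=\mathbf{1}_M$ otherwise.

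The next step is to show that each such term sits in exactly one of the orthogonal subspaces whose direct sum is $\widehat{X}^{C_{i_1}\cdots C_{i_k}}$. Each term is trivially contained in the appropriate $\widehat{X}^{\mathsf{c}(S)}$, since the $\mathbf{1}_M$ factors make it constant along every axis outside $S$. Orthogonality to any basis vector of a strictly smaller subspace $\widehat{X}^{\mathsf{c}'}_\perp$ with $\mathsf{c}'\subsetneq\mathsf{c}(S)$ follows from the tensor-product inner-product identity $\langle\bigotimes v_j,\bigotimes v'_j\rangle=\prod_j\langle v_j,v'_j\rangle$: any such smaller basis vector carries $\mathbf{1}_M$ on at least one axis where our term carries $w$, and $\langle w,\mathbf{1}_M\rangle=(M-1)+(M-1)(-1)=0$ forces the product to vanish. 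Because the orthogonal decomposition is unique, the term corresponding to $S=\{i_1,\ldots,i_k\}$ must equal $\mathtt{X}^{i_k\cdots i_1}_{\perp\cdots\perp}(0\cdots 0)$.

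Finally I would unfold this distinguished Kronecker product $\tfrac{1}{M^k}\bigotimes_{j=N-1}^{0}v_j^{(\{i_1,\ldots,i_k\})}$ in lexicographic order and match it to (\ref{orthogk}) entry by entry. The outer $\mathbf{1}_M$ factors on axes $j>i_k$ contribute the outermost multiplicity $M^{N-i_k-1}$; each $\mathbf{1}_M$ run between successive interaction axes $i_m$ and $i_{m+1}$ produces a multiplicity $M^{i_{m+1}-i_m-1}$; and each $w$ factor multiplies the block at position $0$ along its axis by $(M-1)$ and each of the remaining $M-1$ blocks by $-1$, so that every position accumulates the correct signed power of $(M-1)$. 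This final combinatorial bookkeeping is the main obstacle --- notationally dense but conceptually mechanical --- and once carried out it reproduces (\ref{orthogk}).
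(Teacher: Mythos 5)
Your proposal is correct, and it takes a genuinely different route from the paper. The paper does not derive (\ref{orthogk}) constructively: it obtains the $k=2$ case (\ref{orthog2}) explicitly, states that the general form is ``suggested'' by that case and by numerical calculations, and then the proof of the lemma works with the conjectured block pattern directly, verifying the submodule relation $\mathcal{S_M}_1=-(M-1)\mathcal{S_M}_2$ and using it to establish mutual orthogonality of the candidate columns across the $\binom{N}{k}$ index sets, from which it concludes that (\ref{orthogk}) is the orthogonalised column. Your Kronecker-product argument instead derives the formula: writing the pre-orthogonalised column of Lemma \ref{lemma1} as $\bigotimes_j u_j$ with $u_j=e_0$ on interaction axes and $u_j=\mathbf{1}_M$ elsewhere, expanding $e_0=\tfrac{1}{M}(\mathbf{1}_M+w)$, and locating each of the $2^k$ resulting terms in its own summand of the orthogonal decomposition $\widehat{X}^{\boldsymbol{\mathsf{c}}}=\bigoplus_{\mathsf{c}'\subseteq\mathsf{c}}\widehat{X}^{\mathsf{c}'}_\perp$ via the factorised inner product and $\langle w,\mathbf{1}_M\rangle=0$, so that the projection onto $\widehat{X}^{\boldsymbol{\mathsf{c}}^\ell_k}_\perp$ is forced by uniqueness to be the single term $\tfrac{1}{M^k}\,w^{\otimes k}\otimes\mathbf{1}^{\otimes(N-k)}$, whose lexicographic unfolding has entry $\tfrac{1}{M^k}(-1)^{k-z}(M-1)^{z}$ at a cell with $z$ zero interaction indices --- exactly the pattern of (\ref{orthogk}). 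This buys three things the paper's argument does not: it closes the logical gap left by ``numerical calculations suggest''; it yields the cross-submatrix orthogonality that occupies most of the paper's proof as an immediate corollary of the same factorised inner product; and it generalises at no cost to arbitrary level vectors $(a_k\cdots a_1)$ by substituting $Me_{a_j}-\mathbf{1}_M$ for $w$, which would also dispose of the unproved form (\ref{orthog01}). What the paper's presentation buys in exchange is the explicit hierarchical module structure of (\ref{orthogk}), which is what Lemma \ref{Lemma_main} and Theorem \ref{Theorem1} subsequently exploit; if you adopt your derivation you should still record that the bottom-level blocks of the unfolded tensor product are constant, since that fact is used downstream.
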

\begin{proof}
Note that (\ref{orthogk}) is composed of basic units each of length $M^{i_1}$ each individual unit having identical components. These are grouped into a sequence of first level modules each containing $M^{i_1+1}$ components such that each of these modules sums to zero. In turn these first level modules are grouped into second level modules each containing $M^{i_2+1}$ components and so on. This hierarchical construction is continued eventually resulting in $k$th level modules each containing $M^{i_{k-1}+1}$ components. There are then $N\!-\!k\!-\!1$ of these $k$ level modules containing a total of $N$ components. Each module is composed of two submodules $\mathcal{S_M}_1$ and $\mathcal{S_M}_2$ such that 
\begin{equation}
\mathcal{S_M}_1 = -(M-1)\mathcal{S_M}_2.      \label{submod}
\end{equation}

 Orthogonality of (\ref{orthogk}) over the $\binom{N}{k}$ vectors $\mathbf{i}^{\mathbf{s}_\ell}_k $ can be seen by considering two values of $\ell$ say $\mathbf{i}^{\mathbf{s}_{\ell^\prime}}_k $and $\mathbf{i}^{\mathbf{s}_{\ell^{\prime\prime}}}_k $. If $i_1^\prime  <i_1^{\prime\prime}$ the length of the first level modules in vector $\mathtt{X}^{\prime\prime}$ is a multiple of the length of the first level modules in $\mathtt{X}^\prime$ so that the components of $\mathtt{X}^{\prime\prime}$ are uniform over every first level module of $\mathtt{X}^\prime$.  Consequently, in the scalar product of the two vectors, every first level module in the vector $\mathtt{X}^\prime$ sums to zero so that the vectors are orthogonal.

If $i_1^\prime=i_1^{\prime\prime}$, there must be some $j$ where $i_j^\prime\ne i_j^{\prime\prime}$. Suppose $i_j^\prime=i_j^{\prime\prime}-1$. For each $j-1$th module, in the scalar product of $\mathtt{X}^\prime$ and $\mathtt{X}^{\prime\prime}$, the submodule $\mathcal{S_M}_1^\prime$ is multiplied component by component by the submodule $\mathcal{S_M}_1^{\prime\prime}$. However each of the $M-1$ submodules $\mathcal{S_M}_2^\prime$ is also multiplied by $\mathcal{S_M}_1^{\prime\prime}$. From (\ref{submod}), 
\[\mathcal{S_M}_1^\prime \cdot \mathcal{S_M}_1^{\prime\prime} = -(M-1)\mathcal{S_M}_2^\prime \cdot \mathcal{S_M}_1^{\prime\prime}\]
but there are $M-1$ of the $\mathcal{S_M}_2^\prime \cdot \mathcal{S_M}_1^{\prime\prime}$ product components so that the scalar product contribution of each ($j-1$)th module is zero and the vectors are orthogonal. Clearly this is true also when $i_j^{\prime\prime}$ is any multiple of $i_j^\prime$. In fact this argument is independent of $k$ so that all of the vectors $\mathtt{X}^{i_k\cdots i_2 i_1}_{\perp\cdots\perp}(0\cdots 0)|_{k=1:N-1}$, including the main effects, are mutually orthogonal.

Because the orthogonalisation process of Section \ref{loglinear} begins with the leftmost column vectors of each submatrix, the remaining column vectors will be orthogonal to (\ref{orthogk}) so that (\ref{orthogk}) is the orthogonalised form of $ \mathtt{X}^{i_k \cdots i_1}(a_k\cdots a_1)$.
\end{proof}

For $\mathbf{a}_k = \langle 0,\cdots 1\rangle$ the column vector $\mathtt{X}^{i_k \cdots i_1}_{\perp \cdots \perp}(0,\cdots 1)$ lies in the orthogonal complement of  the uniform vector and $\mathtt{X}^{i_k \cdots i_1}_{\perp \cdots \perp}|(0, \cdots 0)$. Explicit derivation of the $k=2$ case and numerical calculations suggest that the $j$th second level module has the form
\begin{equation}
\begin{array}{ll}\underbrace{\underbrace{\overbrace{0}^{\times M^{i_1}}\;\;\overbrace{(M\!\!-\!\!1)^{k\!-\!j}(M\!\!-\!\!2)}^{\times M^{i_1}}\;\overbrace{\;-(M\!\!-\!\!1)^{k\!-\!j}\;\;}^{\times (M\!-\!2)M^{i_1}}}_{\times M^{i_2-i_1-1}}\underbrace{\overbrace{0}^{\times M^{i_1}}\overbrace{-(M\!\!-\!\!1)^{k\!-\!j\!-\!1}(M\!\!-\!\!2)}^{\times M^{i_1}}\;\overbrace{(M\!\!-\!\!1)^{k\!-j\!-\!1}}^{\times (M\!-\!2)M^{i_1}}}_{\times (M\!-\!1)M^{i_2-i_1-1}}}_{\times M^{(I_3-i_2-1)}} . \label{orthog01}
\end{array}
\end{equation}
 Note that this has the same modular structure as (\ref{orthogk}) and again the first level modules sum to zero. Indeed  the structures are sufficiently similar that the above orthogonality arguments can be invoked to show that (\ref{orthog01}) is orthogonal to (\ref{orthogk}) for all $k$ as well as to the corresponding $\mathtt{X}^{i_k^\prime\cdots i_2 i_1}_{\perp\cdots\perp}(0\cdots 1)$ for $k^\prime\ne k$. 
 
 Because the origin of this critical modularity is (\ref{orthog_form}), the orthogonalisation process applied to all of the remaining independent column vectors of $X^{\boldsymbol{\mathsf{c}}_{k}^\ell}$ will result in the orthogonalised submatrice ${\overline X}^{\boldsymbol{\mathsf{c}}_{k}^\ell}$ having the same modular structure as (\ref{orthogk}) and (\ref{orthog01}). 

\subsection{Conditional Subtables and their Geometric Means}
\label{Conditional}
Each term of the log linear expansion (\ref{xifunct}), $\xi_{\boldsymbol{\mathsf{c}_{k}^{\ell}}}$, involves the subset of 
$k$ attributes $\mathsf{c}_k^\ell$ which raises the question of how to handle the remaining $N\!-\!k$. While marginalisation is the common approach, the nonlinear $\log$ transformation makes the result difficult to interpret. Instead we adopt the alternative approach in which the members of $\mathsf{c}$ determine a subtable conditioned on the remaining attributes. 
 
 Recall from Section \ref{loglinear} that the $k$ attributes associated with the submatrix ${\overline X}^{{\boldsymbol{\mathsf{c}}_{k}^{\ell}}}$ are designated by $\mathbf{i}^{s_\ell}_k $ with the remaining $N\!-\!k$ attributes designated by the vector $\mathbf{i}^{g_\ell}_k$. Via (\ref{lex}), the column vectors of the submatrix have exactly the same indexing structure as the contingency table $\mathbf{T}$, so, from (\ref{column_index}), we can interpret an index vector of the form $\boldsymbol{\mathfrak{i}}$ as indexing a subtable designated by $\mathbf{g}^{k,\ell}$. This leads to:
 \begin{definition}
 Let the attribute vector $\mathbf{g}^{k,\ell}$ be fixed whereas the attribute vector $\mathbf{a}^{k,\ell}$ is allowed to range over the $M^k$ combinations of the attribute levels. Then the attribute index vector $\mathbf{i}^{g_{k,\ell}} = \langle i^g_{\ell_k} \cdots i^g_{\ell_1}\rangle$  specifies the conditioning attributes of the conditional subtable $\mathbf{T}_{\mathbf{a}^{k,\ell}}|_{\mathbf{g}^{k,\ell}}$ with indices given by, from (\ref{column_index})
  \begin{equation}
  \boldsymbol{\mathfrak{i}}^{s_{k,\ell}}_j = \mathbf{I}^{s_\ell}_ k\mathbf{a}^{k,\ell}+ \mathbf{I}^{g_\ell}_{N-k} \mathbf{g}^{k,\ell}_j   \label{subtable_index}
  \end{equation}
  where the subscript $j$ indicates that the index set is associated with the $j$th combination of conditioning attributes.
 \label{definitioncst}
 \end{definition}
 
 Consequently, it is the conditioning attributes which are associated with the bottom level modules in the modular hierarchy of (\ref{orthogk}) and (\ref{orthog01}) with (\ref{column_form}) ensuring that the elements in each of the bottom level modules are equal. Because of the row by row operation of the orthogonalisation process, the equality of the elements of these components is preserved by the orthogonalisation process even if the values of the elements end up differing from module to module. This is evident from (\ref{orthogk}) and (\ref{orthog01}). Furthermore each bottom level module represents one particular combination of attribute values.
 
We then have the following Lemma.
\begin{lemma}
\label{Lemma_main}
Let $\overline{\mathfrak{X}}$ be the orthogonalised $M^k \! \times \!M^k$ design matrix and let $\boldsymbol{\Gamma}^\ell_k$ be the table of log transformed geometric means, over the conditioning attributes $\mathbf{g}^{k,\ell}$, of corresponding entries in the $M^{N-k}$ conditional subtables $\mathbf{T}\!_{\mathbf{a}^{k,\ell}}|_{\mathbf{g}^{k,\ell}}$. Then,
 the magnitude of the projection of  $\boldsymbol{\Gamma}^\ell_k$ onto the subspace $\widehat{\mathfrak{X}}^{\boldsymbol{\mathsf{c}}^\ell_k}_\perp$, $\mathbf{\mathcal{X}}^\perp_{k\ell}$, is given by
 \begin{equation}
 |\mathbf{\mathcal{X}}^\perp_{k\ell}| = M^{-\frac{N-k}{2}}|\boldsymbol{\chi}^\perp_{k\ell}|   \label{gmproj}
 \end{equation}
 where, from (\ref{magsq}), $|\boldsymbol{\chi}^\perp_{k\ell}|$ is the magnitude of the projection of the originating table vector $\mathbf{T}$ onto the subspace $\widehat{X}^{\boldsymbol{\mathsf{c}}^{\ell}_{k}}_{\perp}$.
 
\end{lemma}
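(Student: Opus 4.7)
The plan is to exploit the modular structure of the orthogonalised design matrix established in Lemma \ref{lemma2}. The key observation is that every column vector of the submatrix $\overline{X}^{\boldsymbol{\mathsf{c}}^{\ell}_{k}}$ is constant on each conditioning slice, i.e.\ its value depends only on the target attribute values $\mathbf{a}^{k,\ell}$ and not on $\mathbf{g}^{k,\ell}$. Before orthogonalisation this is immediate from (\ref{original_column}), since each column is the indicator of a fixed combination of target attribute values. Gram-Schmidt orthogonalisation against the constant vector and the columns of $\overline{X}^{\boldsymbol{\mathsf{c}}^{\ell'}_{k'}}$ for $\boldsymbol{\mathsf{c}}^{\ell'}_{k'} \subsetneq \boldsymbol{\mathsf{c}}^{\ell}_{k}$ subtracts only vectors that themselves share this slice-constancy property, so by linearity the property is inherited by the orthogonalised columns, consistent with the explicit forms (\ref{orthogk}) and (\ref{orthog01}).

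Given this, I identify each column $\mathtt{X}_\perp$ of $\overline{X}^{\boldsymbol{\mathsf{c}}^{\ell}_{k}}$ with a reduced vector $\widetilde{\mathtt{X}}_\perp \in \mathfrak{R}^{M^k}$, its common value on any single conditioning slice. The same Gram-Schmidt construction carried out in the reduced $M^k$-dimensional space produces exactly these $\widetilde{\mathtt{X}}_\perp$ as the corresponding columns of $\overline{\mathfrak{X}}$ spanning $\widehat{\mathfrak{X}}^{\boldsymbol{\mathsf{c}}^{\ell}_{k}}_\perp$. Since $\boldsymbol{\Gamma}^\ell_k$ is by construction the arithmetic mean, over the $M^{N-k}$ slices of $\mathbf{T}$, of the corresponding log-entries (the log of the geometric mean equalling the mean of the logs), the inner product decomposes as
\begin{equation*}
\mathtt{X}_\perp \cdot \mathbf{T} = \sum_{\mathbf{a}} \widetilde{\mathtt{X}}_\perp(\mathbf{a}) \sum_{\mathbf{g}} \mathbf{T}(\mathbf{a},\mathbf{g}) = M^{N-k}\, \widetilde{\mathtt{X}}_\perp \cdot \boldsymbol{\Gamma}^\ell_k,
\end{equation*}
while $|\mathtt{X}_\perp|^2 = M^{N-k}\, |\widetilde{\mathtt{X}}_\perp|^2$. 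Substituting into the projector identity underlying (\ref{magsq}) -- equivalently, summing the squared projection coefficients column by column -- gives $|\boldsymbol{\chi}^\perp_{k\ell}|^2 = M^{N-k}\, |\mathbf{\mathcal{X}}^\perp_{k\ell}|^2$, from which (\ref{gmproj}) follows on taking square roots.

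The main obstacle I anticipate is the rigorous justification of slice-constancy surviving all successive Gram-Schmidt steps in the prescribed lexicographic ordering. The explicit forms of Lemma \ref{lemma2} and of (\ref{orthog01}) cover only leading columns; a general argument must either proceed inductively through the modular hierarchy, or, more elegantly, observe that the ambient space $\widehat{X}^{\boldsymbol{\mathsf{c}}^{\ell}_{k}}$ is itself contained in the slice-constant subspace of $\mathfrak{R}^{M^N}$ (because its spanning indicators are) and that orthogonal complements taken within this slice-constant ambient space remain slice-constant. Once this invariance is established, the identification $\mathtt{X}_\perp \leftrightarrow \widetilde{\mathtt{X}}_\perp$ with the reduced orthogonalisation becomes routine, and the remainder of the proof is bookkeeping with the projector formula.
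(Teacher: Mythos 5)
Your proposal is correct and follows essentially the same route as the paper's proof: both rest on the observation that every column of $\overline{X}^{\boldsymbol{\mathsf{c}}^{\ell}_{k}}$ is constant over the conditioning attributes, use this to collapse the design matrix to the reduced $M^k\times M^k$ matrix $\overline{\mathfrak{X}}^{\boldsymbol{\mathsf{c}}^{\ell}_{k}}$, identify the slice-sum of log entries with $M^{N-k}$ times the log geometric mean, and track the factor $M^{N-k}$ through the normalised projector of (\ref{magsq}). Your ambient-subspace argument for why slice-constancy survives orthogonalisation is a slightly cleaner justification than the paper's appeal to the explicit modular forms (\ref{orthogk}) and (\ref{orthog01}), but the substance of the proof is the same.
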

\begin{proof}
It is clear from (\ref{column_index}) and the construction of the basis vectors that the $1$'s in (\ref{original_column}) constitute an indicator vector for the conditioning indices 
 \begin{equation}
 \boldsymbol{\mathfrak{i}}^{g_{k,\ell}}_j =  \mathbf{I}^{s_\ell}_ k\mathbf{a}^{k,\ell}_j + \mathbf{I}^{g_\ell}_{N-k}\mathbf{g}^{k,\ell} \label{cond_indices}
 \end{equation}
 associated with the $j$th combination of conditional attribute values as $\mathbf{g}^{k,\ell}$ ranges over the $M^{N-k}$ combinations of the conditioning attributes.
The scalar product of the $\nu$th column of ${\overline X}^{{\boldsymbol{\mathsf{c}}_{k}^{\ell}}}$ and the table vector $\mathbf{T}$ is composed of the sums of the $M^k$ individual scalar products of the subtables $\mathbf{T}(\boldsymbol{\mathfrak{i}}^{g_{k,\ell}}_j)$ and the corresponding $\overline{X}^{\boldsymbol{\mathsf{c}}^\ell_k}_\nu (\boldsymbol{\mathfrak{i}}^{g_{k,\ell}}_j)$, $j\in \{1\cdots M^k\}$. However, because all of the elements of the latter are equal, each of the individual scalar products is simply the sum of the values in the particular subtable $\mathbf{T}(\boldsymbol{\mathfrak{i}}^{g_{k,\ell}}_j)$ multiplied by one of the elements of $\overline{X}^{\boldsymbol{\mathsf{c}}^\ell_k}_\nu (\boldsymbol{\mathfrak{i}}^{g_{k,\ell}}_j)$.

 The $M^k$ values of $\mathbf{a}^{k,\ell}$ then generate the elements of an $M^k$ dimensional column vector $\overline{\mathfrak{X}}^{\boldsymbol{\mathsf{c}}^\ell_k}_\nu$ by sampling the column vector $\overline{X}^{\boldsymbol{\mathsf{c}}^\ell_k}_\nu$  at the entry indexed by (\ref{cond_indices}) for every $\mathbf{a}^{k,\ell}_j$. This structure is shared by all $(M\!-\!1)^k$ columns of the submatrix $\overline{X}^{\boldsymbol{\mathsf{c}}^\ell_k}$ so the resultant vectors can be collected into the  submatrix $\overline{\mathfrak{X}}^{\boldsymbol{\mathsf{c}}^\ell_k}$.  In fact it is clear from the proof of Lemma \ref{lemma1} and (\ref{column_form}) that the sampling is performed by eliminating the repetition as the row index count proceeds from $i_j$ to $i_{j+1}$ which requires the indices $i_1=0$ and $I_{j+1}=I_j+1$ in the reduced matrix ensuring $i_k=k-1$. 

With $\mathfrak{D}_{\boldsymbol{\mathsf{c}}_{k}^{\ell}}=\overline{\mathfrak{X}}^{{\boldsymbol{\mathsf{c}}_{k}^{\ell}}^\mathsf{T}}\overline{\mathfrak{X}}^{\boldsymbol{\mathsf{c}}_{k}^{\ell}}$, this enables the magnitude of the function  $\xi_{\boldsymbol{\mathsf{c_{2}^{\ell}}}}$ to be expressed as, using (\ref{magsq}), 
\begin{equation}
 |\boldsymbol{\chi}^\perp_{k\ell}| = M^{-\frac{(N-k)}{2}} \!\left|\mathfrak{D}^{-\frac{1}{2}}_{\boldsymbol{\mathsf{c}}_{k}^{\ell}}{\overline{\mathfrak{X}}}^{{\boldsymbol{\mathsf{c}}_{k}^\ell}^{\mathsf{T}}}\sum_{\mathbf{g}^{k,\ell}}\mathbf{T}\!_{\mathbf{a}^{k,\ell}}|_{\mathbf{g}^{k,\ell}} \right| = M^{\frac{(N-k)}{2}} \left|\mathfrak{D}^{-\frac{1}{2}}_{\boldsymbol{\mathsf{c}}_{k}^{\ell}}{\overline{\mathfrak{X}}^{\boldsymbol{\mathsf{c}}_{k}^\ell}}^{\mathsf{T}}\frac{\sum_{\mathbf{g}^{k,\ell}}\mathbf{T}\!_{\mathbf{a}^{k,\ell}}|_{\mathbf{g}^{k,\ell}}}{M^{(N-k)}} \right|.   \label{condmag2}
 \end{equation}

Inverting the $\log$ transform, the summation term in (\ref{condmag2}) becomes
\begin{equation}
\mathcal{T}_k^\ell = \left(\prod_{\mathbf{g}^{k,\ell}}\mathcal{T}_{\mathbf{a}^{k,\ell}}|_{\mathbf{g}^{k,\ell}}\right)^{\frac{1}{M^{(N-2)}}}   \label{geo_mean} 
\end{equation}
which is the geometric mean of the conditional subtable $\mathcal{T}_{\mathbf{a}^{k,\ell}}|_{\mathbf{g}^{k,\ell}}$ over the conditioning attributes $\mathbf{g}^{k,\ell}$.
\end{proof}

This leads to our main theoretical result:
\begin{theorem}
\label{Theorem1}
Let $\overline{\mathfrak{X}}_0$ be the $M^{k_0}\times M^{k_0}$ design matrix and $\boldsymbol{\Gamma}^\ell_{k_0}$ be the log geometric mean of the conditional subtables $\mathbf{T}\!_{\mathbf{a}^{k_0,\ell}}|_{\mathbf{g}^{k_0,\ell}}$ over the conditioning attributes $\mathbf{g}^{k_0,\ell}$. Then the magnitude of the projection of $\boldsymbol{\Gamma}^\ell_{k_0}$ onto the subspace ${\widehat{\mathfrak{X}}^{\boldsymbol{\mathsf{c}}_{k}^\ell}_\perp}$ for $k\le k_0$ is given by
\begin{equation}
 |\boldsymbol{\chi}^\perp_{k\ell}|  = M^{\frac{(N-k_0)}{2}} \left|\mathfrak{D}^{-\frac{1}{2}}_{\boldsymbol{\mathsf{c}}_{k}^{\ell}}{\overline{\mathfrak{X}}^{\boldsymbol{\mathsf{c}}_{k}^\ell}}^{\mathsf{T}} \boldsymbol{\Gamma}^\ell_{k_0}\right|.   \label{part_gm_proj}
 \end{equation} 
Furthermore the magnitude of the projection of the original table vector onto the $M^{k_0}$ dimensional subspace $\widehat{X}^{\boldsymbol{\mathsf{c}}^{\ell}_{k}}$, $\boldsymbol{\chi}_{k_0\ell}$, orthogonal to the uniform vector is given by 
\begin{equation}
|\boldsymbol{\chi}_{k_0\ell}| = M^\frac{(N-k_0)}{2}\left|\bf{\mathfrak{T}}_{k_0\ell}\right|  \label{gm_proj}
\end{equation}
where $\bf{\mathfrak{T}}_{k_0\ell}$ is the magnitude of the projection of $\Gamma^\ell_{k_0}$ onto the $M^{k_0}$ dimensional subspace ${\widehat{\mathfrak{X}}^{\boldsymbol{\mathsf{c}}_{k}^\ell}}$ , orthogonal to the uniform vector. 
\end{theorem}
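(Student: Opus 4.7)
The plan is to extend Lemma \ref{Lemma_main} from the case $k=k_0$ to all $k\le k_0$ with $\boldsymbol{\mathsf{c}}_k^{\ell}\subseteq \boldsymbol{\mathsf{c}}_{k_0}^{\ell}$, exploiting the fact that the structural property underpinning its proof --- that the basis vectors of $\overline{X}^{\boldsymbol{\mathsf{c}}_k^\ell}$ are constant along their $N-k$ conditioning axes --- remains available in this more general setting. Equation (\ref{gm_proj}) will then follow from (\ref{part_gm_proj}) by an orthogonal direct-sum decomposition of the ambient $M^{k_0}$-dimensional space.

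For (\ref{part_gm_proj}), I would observe that because $\boldsymbol{\mathsf{c}}_k^\ell \subseteq \boldsymbol{\mathsf{c}}_{k_0}^{\ell}$, the $N-k_0$ attributes marginalised to form $\boldsymbol{\Gamma}^\ell_{k_0}$ lie among the $N-k$ conditioning attributes of $\overline{X}^{\boldsymbol{\mathsf{c}}_k^\ell}$, so its basis vectors are constant along those axes. Splitting the product ${\overline X}^{{\boldsymbol{\mathsf{c}}_k^\ell}^\mathsf{T}}\mathbf{T}$ into an inner sum along the marginalised axes and an outer sum along the remaining $k_0$ axes, the inner sum collapses $\mathbf{T}$ into $M^{N-k_0}\boldsymbol{\Gamma}^\ell_{k_0}$, while the outer sum is precisely ${\overline{\mathfrak{X}}}^{{\boldsymbol{\mathsf{c}}_k^\ell}^\mathsf{T}}\boldsymbol{\Gamma}^\ell_{k_0}$. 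The same replication argument gives $D_{\boldsymbol{\mathsf{c}}_k^\ell}=M^{N-k_0}\mathfrak{D}_{\boldsymbol{\mathsf{c}}_k^\ell}$ for the diagonal Gram matrices. Substituting both into the quadratic form (\ref{magsq}) and cancelling powers of $M$ yields $|\boldsymbol{\chi}^\perp_{k\ell}|^2 = M^{N-k_0}\bigl|\mathfrak{D}^{-1/2}_{\boldsymbol{\mathsf{c}}_k^\ell}{\overline{\mathfrak{X}}}^{{\boldsymbol{\mathsf{c}}_k^\ell}^\mathsf{T}}\boldsymbol{\Gamma}^\ell_{k_0}\bigr|^2$, which is (\ref{part_gm_proj}) after taking square roots.

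For (\ref{gm_proj}), I would appeal to the orthogonal decomposition of the $M^{k_0}$-dimensional ambient space containing $\boldsymbol{\Gamma}^\ell_{k_0}$ into the uniform line together with the subspaces $\widehat{\mathfrak{X}}^{\boldsymbol{\mathsf{c}}_k^{\ell'}}_\perp$, indexed by all non-empty $\boldsymbol{\mathsf{c}}_k^{\ell'}\subseteq \boldsymbol{\mathsf{c}}_{k_0}^{\ell}$ with $1\le k\le k_0$; the dimensions sum correctly by the identity underlying (\ref{matrix_size}). By Pythagoras, $|\mathfrak{T}_{k_0\ell}|^2 = \sum_{k,\ell'}\bigl|\mathfrak{D}^{-1/2}_{\boldsymbol{\mathsf{c}}_k^{\ell'}}{\overline{\mathfrak{X}}}^{{\boldsymbol{\mathsf{c}}_k^{\ell'}}^\mathsf{T}}\boldsymbol{\Gamma}^\ell_{k_0}\bigr|^2$, and the analogous decomposition of $\widehat{X}^{\boldsymbol{\mathsf{c}}_{k_0}^{\ell}}$ in the original $N$-dimensional space gives $|\boldsymbol{\chi}_{k_0\ell}|^2 = \sum_{k,\ell'}|\boldsymbol{\chi}^\perp_{k\ell'}|^2$. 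Applying (\ref{part_gm_proj}) term by term factors out a common $M^{N-k_0}$, yielding (\ref{gm_proj}) after one more square root.

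The main obstacle will be justifying the identification of the reduced matrix ${\overline{\mathfrak{X}}}^{\boldsymbol{\mathsf{c}}_k^\ell}$ obtained by collapsing the replicated entries of ${\overline X}^{\boldsymbol{\mathsf{c}}_k^\ell}$ with the orthogonalised design matrix one would construct directly on the $k_0$-attribute reduced table --- equivalently, that the row-wise orthogonalisation of Section \ref{loglinear} commutes with the reduction. This should follow from Lemma \ref{lemma2} together with the observation at the end of Section \ref{form} that the hierarchical modular structure of (\ref{orthogk})--(\ref{orthog01}) depends only on the intrinsic index gaps $i_{j+1}-i_j$, and not on $N$ or on attributes lying outside $\boldsymbol{\mathsf{c}}_{k_0}^{\ell}$.
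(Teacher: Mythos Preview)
Your proposal is correct and follows essentially the same strategy as the paper: exploit the constancy of the orthogonalised basis vectors of $\overline{X}^{\boldsymbol{\mathsf{c}}_k^\ell}$ along conditioning axes to relate the inner products at different levels of averaging, track the corresponding scaling of the diagonal Gram matrices, and then deduce (\ref{gm_proj}) from (\ref{part_gm_proj}) by Pythagoras over the orthogonal subspace decomposition. The only organisational difference is that you collapse directly from the full $M^N$-dimensional computation to the $M^{k_0}$-dimensional one in a single step, whereas the paper routes the argument through the intermediate $M^k$-dimensional stage, establishing ${\overline{\mathfrak{X}}^{\boldsymbol{\mathsf{c}}_{k}^\ell}}^{\mathsf{T}}\boldsymbol{\Gamma}^\ell_{k} = {\overline{\mathfrak{X}}^{\boldsymbol{\mathsf{c}}_{k}^\ell}}^{\mathsf{T}}\boldsymbol{\Gamma}^\ell_{k_0}/M^{j}$ together with $\mathfrak{D}^{-1/2}_{\boldsymbol{\mathsf{c}}_{k}^{\ell}} = \mathfrak{D}^{-1/2}_{\boldsymbol{\mathsf{c}}_{k_0}^{\ell}}M^{j/2}$ and then combining with Lemma~\ref{Lemma_main}. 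Your direct route is slightly cleaner; the paper's two-stage version makes the dependence on Lemma~\ref{Lemma_main} more explicit. The ``main obstacle'' you flag --- that collapsing the replicated blocks of $\overline{X}^{\boldsymbol{\mathsf{c}}_k^\ell}$ yields exactly the intrinsically constructed $\overline{\mathfrak{X}}^{\boldsymbol{\mathsf{c}}_k^\ell}$ --- is handled in the paper only implicitly via the modular structure of (\ref{orthogk})--(\ref{orthog01}), so your appeal to Lemma~\ref{lemma2} is the right justification.
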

\begin{proof}
From (\ref{original_column}), the number of $1$s in a pre-orthogonalised column of ${\overline X}^{\boldsymbol{\mathsf{c}}_{k_0}^{\ell}}$ is $M^{N\!-\!k_0}$ which is the number of combinations of the conditioning attributes $\mathbf{g}^{k_0,\ell}$. For $k=k_0\!-\!j$, $j$ attributes have been changed from conditional to conditioning so  there are $k$ sets of $k_0\!-\!j$ attributes and, for each set, $N\!-\!k_0\!+\!j$ conditioning attributes.  Because of the additional conditioning attribute, the set of $M^{N\!-\!k_0}$ $1$s is expanded by a factor of $M^j$ in a pattern determined by the form of (\ref{original_column}). These $1$s are transformed to equal elements in the orthogonalisation process. 

As in the proof of Lemma \ref{Lemma_main}, the scalar product of $\boldsymbol{\Gamma}^\ell_{k_0}$ and a column of the submatrix ${\overline{\mathfrak{X}}^{\boldsymbol{\mathsf{c}}_{k}^\ell}_k}$ is composed of sums of $M^j$ terms from $\boldsymbol{\Gamma}^\ell_{k_0}$ each sum multiplied by an identical element of the submatrix column. Then
\[ {\overline{\mathfrak{X}}^{\boldsymbol{\mathsf{c}}_{k}^\ell}}^{\mathsf{T}}\boldsymbol{\Gamma}^\ell_{k} = {\overline{\mathfrak{X}}^{\boldsymbol{\mathsf{c}}_{k_0}^\ell}}^{\mathsf{T}}\boldsymbol{\Gamma}^\ell_{k_0}/M^j. \]
With $\mathfrak{D}^{-\frac{1}{2}}_{\boldsymbol{\mathsf{c}}_{k}^{\ell}} = \mathfrak{D}^{-\frac{1}{2}}_{\boldsymbol{\mathsf{c}}_{k_0}^{\ell}}M^{\frac{j}{2}}$ this gives (\ref{part_gm_proj}). 
Equation (\ref{gm_proj}) then follows because of the orthogonality of the subspaces means that $|\boldsymbol{\chi}_{k_0\ell}|$ and $\left|\bf{\mathfrak{T}}_{k_0\ell}\right|$ are each given by the square root of the sum of the squares of the magnitudes of the respective projections onto the individual subspaces.
\end{proof}

The immediate implication of Theorem \ref{Theorem1} is that `marginalisation' of a contingency table to reduce the number of attributes using a geometric mean preserves the interaction structure of the table up to the interactions between all of the conditional attributes. 

There are also implications for the collapsibility \cite{Dar1} of the table although investigation of this aspect is beyond the scope of the current paper. Further implications for cyber security are considered in the Discussion, Section \ref{discussion}, below.

\subsection{Probabilistic Salience}
\label{salience}

That the projection of a log table vector onto the subspace ${\widehat{\mathfrak{X}}^{\boldsymbol{\mathsf{c}}_{k}^\ell}}$ is orthogonal to the corresponding uniform vector suggests that the larger the magnitude of this projection is in proportion to the vector magnitude, the more the table entries are concentrated on a small number of cells.  That is to say, the more salient are those cells. This motivates the following definition.

\begin{definition}
\label{salience_factor}
Let $C_s\subset C$ be a subset of the set of attributes $C$ and $\mathcal{T}_s$ be a conditional subtable in the attributes $C_s$.
\begin{description}
\item[a)] If $|\chi_s|$ is the magnitude of the projection of the log linear transform, $T_s$,  of $\mathcal{T}_s$ onto the subspace orthogonal to the uniform subtable vector $T_{su}$, then the {\bf Probabilistic Salience}  of the conditional subtable $\mathcal{T}_s$ is
\begin{equation}
\psi_s = \frac{|\chi_s|}{|T_s|}.                     \label{psidef}
\end{equation}
\item[b)] Let $\mathcal{T}_{gm}$ be the geometric mean of all of the conditional subtables in $C_s$ generated by assigning values to the conditioning attributes $C\!\setminus\! C_s$ and $T_{gm}$ be its log linear transform. If $|\chi_{gm}|$ is the magnitude of the projection of $T_{gm}$ onto the subspace orthogonal to the uniform subtable vector $T_{gmu}$ then the {\bf Probabilistic Salience Function} of the attribute subset $C_s$ is
\begin{equation}
\Psi_{gm} = \frac{|\chi_{gm}|}{|T_{gm}|}            \label{Psidef}
\end{equation}
\end{description}
\end{definition}

Our proposal is that Probabilistic Salience is a measure of the degree to which a contingency table is dominated by or concentrated on a small number of attribute values, a proposal which we will vindicate in section \ref{vindication} below.

\subsection{Potential Application  - Interaction Limiting}
\label{limiting}
While the organisational counter measures discussed in the Introduction provide one line of defence against social engineering attacks, we propose an alternative approach which employs the techniques developed in this paper in a different role to that which has been the focus thus far.  It has been demonstrated multiple times that releases of statistics from de-identified databases can be processed to re-identify individuals. Differential privacy techniques introduce controlled noise into the data such that the resultant statistics have a bounded error but the risk of re-identification is significantly reduced. Application of these techniques to contingency table data is described in \cite{Yang2} using a very different form of log linear model to that employed here.

By analogy, the objective of what we refer to as `de-personalisation' is to inhibit the inference of attribute values in vulnerable cases while retaining the overall character of the data.  De-personalisation is performed by firstly  calculating the log linear expansion coefficients $\boldsymbol{\beta}$ using (\ref{betaV}), then setting $\boldsymbol{\beta}_{k\ell}=0$ when $k>k_\dag$ for some $k_\dag$. Finally the modified data is reconstructed using (\ref{orthexp}). This limits the interaction to that between $k_\dag$ attributes, in a sense, blurring the cell contents. We refer to this as interaction limiting the data by analogy with bandlimiting in Fourier transform based signal processing.

Suppose $k_\dag<k_0$ and suppose the marginalised data is publicly released in the form of $k_0$ dimensional geometrical mean subtables $\mathcal{T}^\ell_{k_0}$ involving subsets of $k_0$ attributes. Theorem \ref{Theorem1} ensures that the maximum interaction order of the released data is limited to $k_\dag$.  It then follows from definition \ref{salience_factor} {\bf b} via (\ref{Psidef}) that the Probabilistic Salience Function of the released data is less than it would have been had the data not been de-personalised.

However simple interaction limiting as described is a heavy handed method of de-personalisation. A more subtle approach would be to use the Probabilistic Salience Function to detect subsets of attributes with high interactivity via their geometric means as demonstrated in the top left panel of Figure \ref{case27}. The interaction structure of high interactivity subsets  could then be investigated using the reduced design matrix $\mathfrak{X}$ with Theorem \ref{Theorem1} providing the link back to the full table. High value interaction coefficients $\boldsymbol{\beta}$ (Section \ref{xi}) can be selectively set to zero although some care is required to maintain the hierarchical nature of the expansion \cite{DLS}\cite{Dar1}. 

While this is necessarily a brief sketch of the proposal we believe that the theory described here provides a solid foundation for a detailed investigation and development of de-personalisation techniques. However this lies well beyond the scope of this paper.

\subsection{Probabilistic Salience as a Measure of Concentration}
\label{vindication}

We make the basic assumption that the contingency table is a member of an ensemble of similar tables with the common characteristic that they represent the same population so that their respective entries sum to the population size say $N_T$. More formally this is equivalent to assuming that each table is drawn from a multinomial distribution although we will not make use of this fact. Each table vector $\mathcal{T}^0$  is then a point on a simplex, $\mathcal{S}^0$ \cite{Bronst}, and can be expressed in the form

\begin{equation}
\mathcal{T}^0 = \sum_{k=1}^{M_T} \lambda_k V^0_k  \;\;\ : \;\; \sum_{k=1}^{M_T} \lambda_k = 1 \nonumber
\end{equation}
where $\lambda_k, k\in \{1\;M_T\}$ are the components of a parameter vector $\Lambda$ and the $M_T$ dimensional vertex vector $V^0_k = [\underbrace{0 \cdots 0\; 0}_{k-1}\;N_T\; 0\cdots 0]^\mathsf{T}$
so that $\sum_{\ell=1}^{M_T} \tau^0_\ell = N_T.$

However, to avoid having to deal with $-\infty$ issues caused by zero entries in the table, we will work with the vector $\mathbf{\mathcal{T}}$ derived from the affine mapping 
\[ V=\frac{N_T-M_T}{N_T} V^0 +1\]
 of $\mathcal{S}^0$ which results in a new simplex $\mathcal{S}$. This is the convex hull of its vertices $V_k\;;\;k=1:M_T$ where
\[ V_k = [\underbrace{1 \cdots 1\;\; 1}_{k-1}\;(N_T-M_T+1)\; \;1\cdots 1]^\mathsf{T} \]
and
\begin{equation}
\mathcal{T}  =  \left[\lambda_1(N_T-M_T)+1\;\cdots\;\lambda_k(N_T-M_T)+1\;\cdots\;\lambda_{M_T}(N_T-M_T)+1\right]^\mathsf{T}.  \label{Tmap}
\end{equation}
It is easily verified that again $\sum_{\ell=1}^{M_T} \tau_\ell = N_T.$

Then the log transformed vector $\mathbf{T}$ has the form
\begin{equation}
\mathbf{T} = [\;\cdots\;\; \log(\lambda_k(N_T-M_T)+1)\;\; \cdots\;]^\mathsf{T}    \label{Tterm}
\end{equation}
with $0\le t_k \le \log(N_T-M_T)$.

The above proposal can be validated by finding a vector $\mathbf{T}$ such that the magnitude of the projection of $\mathbf{T}$ onto  the subspace orthogonal to the uniform vector is maximised subject to the constraints that the components of $\mathcal{T}$ sum to $N_T$ and the components of (\ref{Tterm}) are positive. However, rather than solve this constrained optimisation problem directly we will tackle an approximation as defined below - an approximation which improves as $N_T$ increases.

The ensemble of vectors (\ref{Tterm}) can be represented in functional form by expressing, without loss of generality, the component $t_1$ as a function of the remaining components with the $\lambda$s as parameters. From (\ref{Tmap}),
\begin{equation}
 \lambda_k = \frac{e^{t_k} -1}{(N_T-M_T)}          \label{lambdak}
 \end{equation}
which, with
\[ \tau_1 = \lambda_1(N_T-M_T)+\sum_{k=1}^{M_T}\lambda_k = \left(1-\sum_{k=2}^{M_T} \lambda_k\right)\left(N_T-M_T\right)+1\]
leads to
\begin{equation}
t_1 = \log\left(N_T - \sum_{k=2}^{M_T} e^{t_k}\right)\quad:\quad\sum_{k=2}^{M_T} e^{t_k}\le N_T.  \label{T1}
\end{equation}

Now the exponential function is convex and the sum of exponential functions is convex so the negative of the sum of exponential functions is concave \cite{B&V}. Adding $N_T$ to the argument of the $\log$ function retains its concavity so, because the $\log$ function is itself concave, (\ref{T1}) is a concave function. The significance of this here is that the function (\ref{T1}) is topographically simple with no inconvenient local maxima and minima, no hidden valleys etc.

We can understand the `shape' of this function by considering the transform of the vertices of the simplex $\mathcal{S}$ which are described by vectors of the $\lambda$ parameters having the form
\[ [0 \cdots \; 0\;1\;0\;\cdots\;0]^\mathsf{T}. \]
Then, from (\ref{Tterm}), the transformed simplex vertices have the form, 
\[ T_v = [0\cdots\;0\;\log(N_T-M_T+1)\;0\;\cdots\;0]^\mathsf{T} \] suggesting that they be viewed as those vertices of a hypercube which are adjacent to the origin, with the hypercube having edges of length $\log(N_T-M_T+1)$. Each vertex of the hypercube is described by a vector $T^h_v$ which has some number, $r$, of components equal to $\log(N_T-M_T+1)$ with the remainder zero. 

Furthermore if we consider the full set of parameter vectors, $\Lambda_r$, having $r$ nonzero components all equal to $1/r$ for $r=1:M_T-1$, we can see from (\ref{Tterm}) that the corresponding vector $T^{pv}$ has the same number of zeros and that the nonzero components are 
\begin{equation}
\log\left((N_T-M_T)/r+1\right)\le \log(N_T-M_T+1).   \label{Svertices}
\end{equation}
We can then associate each of these vectors $T^{pv}$ with that vertex of the hypercube, $T^h$, which has corresponding nonzero components. 

Again without loss of generality, select a vertex other than the origin, say $V^h_r$, and let the vector $V^h_r$ have $r$ non zero components. Now select any two of those components, say the $\ell$th and $j$th components thereby defining a coordinate plane containing the $\ell$th and $j$th axes of the hypercube. Consider the curve formed by the intersection of the function (\ref{T1}) with a plane parallel to the coordinate plane defined by the $\ell$th and $j$th axes and through the point $T^{pv}_r$ in which the $\ell$th and $j$th components are $t_\ell$ and $t_j$. We can assume that $\ell\ne1$ and $j\ne1$. If one of them is $=1$ or if $t_1=0$, then the following argument is simplified slightly but comes to the same result. 

Because at a vertex, $\lambda_k$ is either $1/r$ or $0$, from (\ref{lambdak}) $e^{t_k}$ is either $\frac{N_T-M_T}{r}+1$, $e^{t_\ell}$, $e^{t_j}$ or $e^0$. The curve is then described, from (\ref{Tterm}) and (\ref{T1}), by 
\begin{equation}
\begin{array}{l}
 \log\left(\frac{(N_T-M_T)}{r} + 1\right) =     \nonumber \\
\hspace{15mm}   \log\left[(N_T-\left((r-3)(\frac{N_T-M_T}{r}+1) +e^{t_\ell} + e^{t_j}+ (M_T-r)e^0\right)\right] \nonumber
 \end{array}
 \end{equation}
 which leads to
 \begin{equation}
 t_\ell = \log\left[2\left(\frac{N_T-M_T}{r}+1\right)-e^{t_j}\right]=\log\left(A-e^{t_j}\right).\label{curve}
\end{equation}

The standard formula for the curvature $\kappa$ of $y=f(x)$ is
\[ \kappa = \frac{|y''|}{\left(1+(y')^2\right)^{\frac{3}{2}}}. \]
Differentiating, assuming $y''\ne0$, gives,
\begin{equation}
 \frac{d\kappa}{dx} = \frac{1}{\left(1+(y')^2\right)^{\frac{3}{2}}} \left(y'''  - \frac{3y'y''^2}{1+(y')^2}\right)  \label{kd}
\end{equation}
 Then, with $t_\ell=y$ and $t_j = x$ in (\ref{curve})
 \[ y' = \frac{-e^x}{A-e^x}, \;\;\;\;\;y'' = -\left(\frac{e^x}{A-e^x} + \frac{e^{2x}}{(A-e^x)^2}\right) \]
 and
 \[ y''' = -\left(\frac{e^x}{A-e^x} + 3\frac{e^{2x}}{(A-e^x)^2} + 2\frac{e^{3x}}{(A-e^x)^3} \right). \]
 
At the point $T^{pv}_r$ (from (\ref{Tterm}))
\[ x = y = t_j = \log\left(\frac{N_T-M_T}{r}+1\right) = \log \frac{A}{2} \]
which results in $y' = -1$, $y'' = -2$, and $y''' = -6$ so that the differential of the curvature is, from (\ref{kd})
\begin{equation}
 \frac{d\kappa}{dx} = 0.    \label{kmax}
 \end{equation}

At $T^{pv}_r$, $\kappa = 2^{-\frac{1}{2}}$. Setting $e^{t_j} = \lambda_j(N_T-M_T)+1$ from (\ref{Tterm}) it is evident that as $\lambda_j$, say, decreases from $1/r$, the magnitude of the numerator of $y'$ decreases whereas the denominator increases so that $|y'|$ decreases from unity.
Now $y'' = y'(1-y')$ and the curvature can be expressed as 
\[\kappa = \frac{y'(1-y')}{(1+y'^2)^{\frac{3}{2}}}. \]
For $|y'|<<1$, $|\kappa|\approx|y'|<1/\sqrt{2}$ which, with (\ref{kmax}) demonstrates that the point $T^{pv}_r$ is the point of maximum curvature of the curve (\ref{curve}). 

This argument holds for any of the $\binom{r}{2}$ curves defined by a pair of non-zero components from the vector $\Lambda_r$, these curves representing mutually orthogonal two dimensional cross sections of the polytope through the vertex $T^{pv}_r$.  The function (\ref{T1}) in the vicinity of $T^{pv}_r$ can therefore be visualised as a rounded version of the vertex $T^h_r$ into which it fits. Note that the curvature is independent of the scaling factors of the function, $N_T$ and $M_T$. Consequently as the size of the hypercube increases with $N_T$ the curvature remains constant as the function (\ref{T1}) expands implying that the function fits deeper into the corner formed by the vertex i.e. that $T^h_r$ becomes a better approximation to $T^{pv}_r$.  We will refer to $T^{pv}_r$ as pseudo vertices. 

At the transformed vertices of the simplex, $r=1$ and can we see from (\ref{Svertices}) that the function coincides with the hypercube at those vertices, i.e. at $T^h_1$. All of this leads to the conclusion that the function (\ref{T1}) is well approximated by the relative boundary of the hypercube excluding those faces containing the origin.

However, as described in Section \ref{loglinear}, our measure of probabilistic structure is derived from the projection of a $\log$ transformed contingency table vector onto the orthogonal subspace, $\widehat{X}^{C_{0}}_\perp$,  of the constant vector $\widehat{X}^{0}$. We will now approximate that projection by first approximating $T$ with a point on the hypercube and then projecting that point onto $\widehat{X}^{C_{0}}_\perp$ noting that the projection of the entire hypercube is a convex polytope $P$. For example in three dimensions, the projection of a cube onto $\widehat{X}^{C_{0}}_\perp$ is a regular hexagon. To simplify the notation we will work with the unit hypercube having vertices $T^{\bar{h}}_r$ with the understanding that the final results are to be scaled by $\log(N_T-M_T+1)$. 

The projection of the vector $T^{\bar{h}}_r$ onto $\widehat{X}^{0}$ is of length $r/\sqrt{M_T}$ and subtracting the projection $\hat{\mathbf{1}}\cdot r/M_T$ from $T^h_r$ results in  $T^p_r$, the projection onto $\widehat{X}^{C_{0}}_\perp$. This has $r$ components equal to $1-r/M_T$ and $M_T-r$ components equal to $-r/M_T$. Its magnitude squared is
\begin{equation}
\mathsf{L}^2(T^p_r) = r(1-r/M_T)^2+(M_T-r)r^2/M^2_T= r(1-r/M_T). \label{L2}
\end{equation}
The squared magnitude of $T^{h1}_r$ is $r$ so that the Probabilistic Salience is, from (\ref{psidef}),
\begin{equation}
\psi = \sqrt{1-r/M_T}    \label{psi}
\end{equation}

Now consider any two vertices of the hypercube other than the all ones vertex and the all zeros vertex. Let these be $T^{h0}_r$ and $T^{h1}_s$ and their projections $T^{p0}_r$ and $T^{p1}_s$. It is clear that there must be some component, say the $\ell$th, of $T^{p0}_r$ which is $1-r/{M_T}$ whereas the $\ell$th component of $T^{p1}_{s}$ is $-r/{M_T}$, so that the $\ell$th term in the scalar product of $T^{p0}_r$ and $T^{p1}_s$ is negative whereas the corresponding term in ${T_r^{p0}}^\mathsf{T}\cdot T^{p0}_r$ is positive. Consequently 
\begin{equation}
{T_r^{p0}}^\mathsf{T}\cdot T^{p1}_{s} <  \mathsf{L}^2(T^{p0}_r)    \label{SPinequal}
\end{equation}
The line $0 - T^{p0}_r$  is normal to a hyperplane $H^0$  such that $T^{p0}_r \in H^0$ and (\ref{SPinequal}) shows that this hyperplane bounds a closed halfspace $k^0$ which contains all of the other vertex projections.

An m-dimensional  hypercube is constructed from hypercubes of lower dimension so its smallest faces are points (vertices), lines (edges) and squares. Select two of the hypercube vertices which are adjacent to some vertex $T^{h0}_r|0<r<M_T$ by taking the complements of, say, the $\ell$th and $j$th components respectively. Assume for the sake of being definite that the $\ell$th component of $T^{h0}_r$ is a one and the $j$th component is a zero so that taking the complement of the $\ell$th component gives the vertex $T^{h1}_{r-1}$ and the complement of the $j$th component gives the vertex  $T^{h2}_{r+1}$. Then define a fourth vertex adjacent to both $T^{h1}_{r-1}$ and $T^{h2}_{r+1}$ by complementing the $j$th  component of $T^{h1}_{r-1}$  and the $\ell$th component of $T^{h2}_{r+1}$ to give the $T^{h3}_r$.

These four vertices define a 2-face of the hypercube which is a two dimensional polytope, in fact a square, and, because the hypercube vertices are the extreme points of the hypercube, these four vertices are the extreme points of the face (\cite{Bronst}, Theorem 7.3). The face is then the convex hull of these four vertices (\cite{Bronst},Theorem 7.2) i.e. the set of all convex combinations of these vertices taken three at a time (\cite{Bronst} Corollary 5.11) . 

Now let $T^{p0}_r$, $T^{p1}_{r-1}$, $T^{p2}_{r+1}$, and $T^{p3}_r$ be the projections of $T^{h0}_r$, $T^{h1}_{r-1}$, $T^{h2}_{r+1}$ and $T^{h3}_r$. Define $T^{pc}_{r-1}$ as the midpoint of the line segment $\{T^{p0}_r,T^{p3}_r\}$. This is also the midpoint of the line segment $\{T^{p1}_{r-1},T^{p2}_{r+1}\}$ implying that the points $T^{p0}_r$, $T^{p1}_{r-1}$, $T^{p2}_{r+1}$, and $T^{p3}_r$ all lie on a plane which is, in fact, the projection of the corresponding hypercube face as expected from the linearity of the projection operation. Furthermore, because the projection is a linear operation, any point, $T^v$, on the projection of the face in question is a convex combination of $T^{p0}_r$, $T^{p1}_{r-1}$, $T^{p2}_{r+1}$, and $T^{p3}_r$ i.e, for $\lambda_1,\lambda_2,\lambda_3, \lambda_4 \ge 0$ and $\lambda_1+ \lambda_2 + \lambda_3 + \lambda_4=1,$
 \[T^v = \sum_{k=1}^4 \lambda_k T^{pk}_\centerdot.\]
 Then
\begin{equation}
{T^{p0}_r}^\mathsf{T} \cdot T^v = \sum_{k=1}^4 \lambda_k {T^{p0}_r}^\mathsf{T}\cdot T^{pk}_\centerdot < \mathsf{L}^2(T^{p0}_r) = \mathsf{L}^2(T^{p3}_r).
 \label{faceinequal}
\end{equation}

Consequently the whole projected 2-face is contained in the halfspace $k$. This argument can be repeated for all of the other 2-faces of the hypercube of which $T^{h0}_r$ is a member establishing that $k$ is a supporting halfspace and $H^0$ is a supporting hyperplane of $P$. Furthermore $H^0\cap P = \{T^{p0}_r\}$ demonstrating that $T^{p0}_r$ is an extreme point of $P$ i.e. a vertex. So we can conclude that the projections of the vertices of the hypercube excluding the all ones and all zeros vertices are in fact the vertices of the polytope $P$.

In the context of an optimisation problem, $|T^v|^2$ is a convex function with the projected 2-face acting as the feasible set and, because there are no distinct local minima or maxima, it attains its global minimum and maximum on the relative boundary of the projected 2-face (\cite{B&V}). Indeed, it follows from \cite{B&V} equation (4.21) that the maximum must occur at some vertex of the projected 2-face and the minimum at another.

Let $T^{v\nu}$ be a point on projected 2-face $\nu$ and $T^{v\mu}$ be a point on a distinct projected 2-face $\mu$. Furthermore let the maximum length vertices be the projections of hypercube vertices with $r^+_\nu$ and $r^+_\mu$ respectively and the minimum length vertices be the projections of hypercube vertices with $r^-_\nu$ and $r^-_\mu$ respectively. Then if $r^+_\nu<r^-_\mu$, from (\ref{psi}), $\psi_\nu>\psi_\mu$. This demonstrates that Probabilistic Salience is a strong indication of the degree to which the table entries are concentrated on a few cells vindicating our proposal above.

\section{Computational Results}
\label{results}

To investigate the type of result that the process can produce, Australian Census data was obtained from the Australian Bureau of Statistics (ABS) using their DATALAB facilities. Security restrictions imposed by the ABS  limited us to 9 attributes and then only with a high level of aggregation, the aggregation scheme being designed by us.  We limited the analysis to seven attributes resulting in 36 distinct cases i.e. nine attributes taken seven at a time. Comparison of the results of applying our algorithm to these 36 cases gave us a good sense of how it responded to changes in the attribute combination. One of the better cases is presented in Fig. \ref{case27} where we focus on bivariate conditional subtables.

While we acknowledge that this is a very small data set by data processing standards, we stress that the purpose is to provide a simple illustration of the operation of our process to enhance understanding rather than to explore large scale structure. The data set is bland in the sense that each attribute interacts strongly with every other attribute but we believe that this bland data set provides a sensitive test of the model's ability to discriminate between similar levels of structure.

The top left hand panel in each figure provides a visual bar graph comparison of the Probabilistic Salience Function values, $\Psi$, for the geometric mean conditional subtables of each pair of attributes. In spite of the blandness of the data and the limitation to pairwise interactions, the bar graph indicates that Probabilistic Salience is able to clearly differentiate between subsets.

The top right panel displays, for the minimum and maximum values of $\Psi$, histograms of the Probabilistic Salience $\psi$ of individual conditional subtables over the set of conditioning attributes. In each case the $\psi$ values are reasonably well clustered around the $\Psi$ values of the respective geometric mean although there is significant dispersion toward higher values in each case. The reason for this becomes is revealed by the remaining panels which show three dimensional histograms of the bivariate conditional subtable entries for the conditioning attribute values listed. For the minimum and maximum $\Psi$ values these are for the subtable closest to the geometric mean (middle) and for the maximum $\psi$ (bottom).

Numerical checks confirm the visual impression that the maximum $\psi$ values are generated by subtables with (originally) many zero entries - indeed the bottom left subtable with the highest $\psi$ has only a single non zero entry. Although this is, somewhat ironically, associated with the minimum $\Psi$, it nevertheless demonstrates the effectiveness of Probabilistic Salience in detecting highly concentrated subtables.

\begin{figure}
\includegraphics[scale = 0.75,keepaspectratio = true,viewport = 2.5cm 2cm 20cm 24cm]{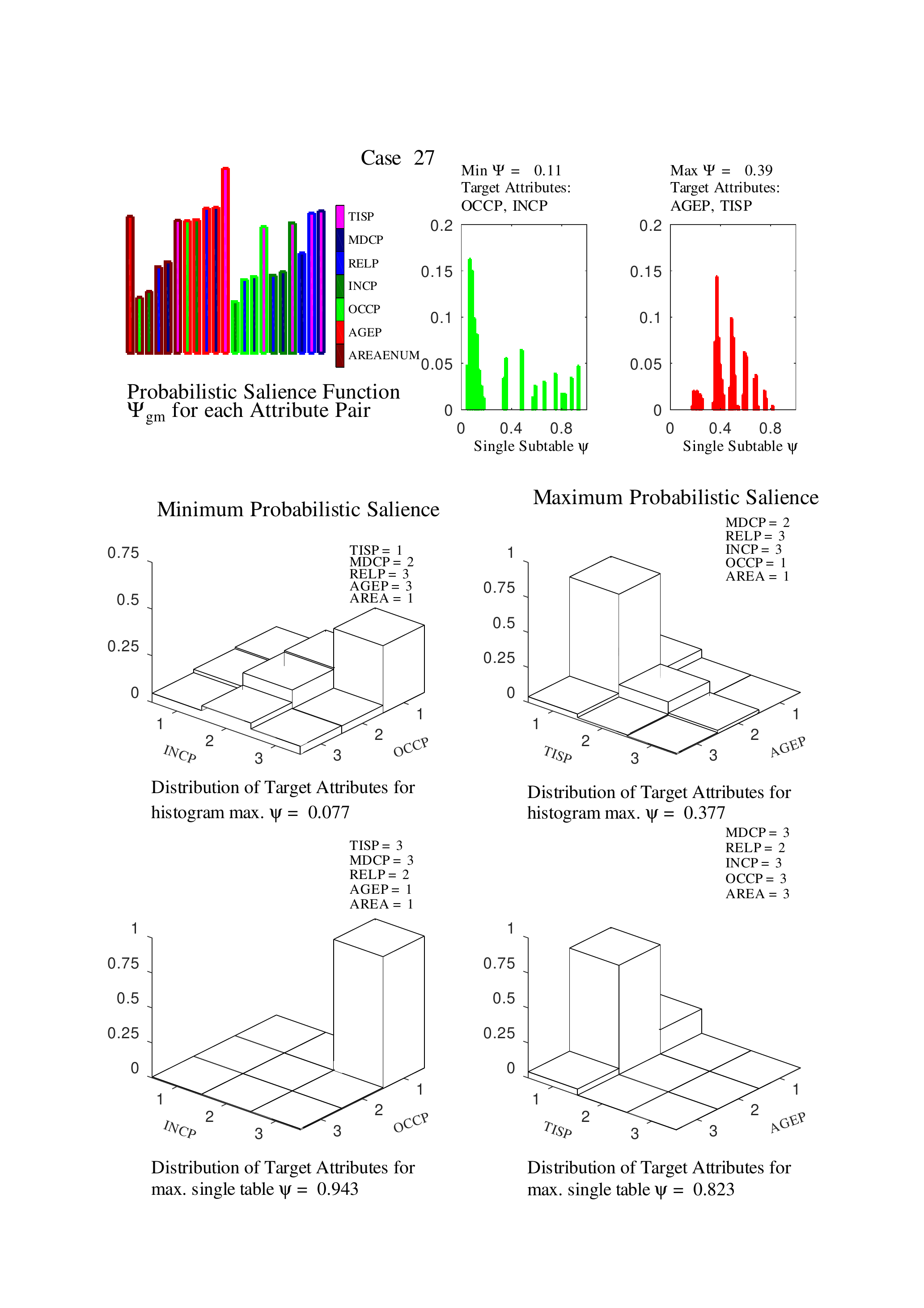}
\caption{The top left panel shows the Probabilistic Salience Function values, $\Psi$, of the geometric means of the conditional subtables for each attribute pair. At top right are histograms of the Probabilistic Salience values, $\psi$, of individual the conditional subtables for maximum and minimum $\Psi$ over the range of conditioning values for the remaining attributes. The remaining panels show three dimensional histograms of the subtables as indicated.}

\label{case27} 
\end{figure}

\section{Summary and Discussion}
\label{discussion}

What emerges from the various perspectives discussed in the Introduction is that, essentially,  social engineering is the extension into the cyber realm of the long standing crime of confidence trickery. Like the confidence trickster, the social engineer exploits psycho-social vulnerabilities in deceiving and manipulating his victims. In light of the failure by the Law to eliminate traditional confidence trickery, the expectation that security technology will defeat social engineering appears to be unreasonably optimistic. The underlying goal of social engineering, in fact, is to bypass security measures such as authentication by focusing on the human element. Indeed authentication mechanisms themselves are susceptible to exploitation \cite{Grimes}\cite{Siadati}.

In response, defence strategies to counter social engineering attacks have concentrated on strengthening the human element by providing strong security policies backed up by ongoing security awareness  and resistance training. A complementary approach is to increase the difficulty of social engineering by e.g. devising more robust SMS messages in two factor authentication \cite{Siadati} or by detecting exploitable personal information spread across social media and reducing its vulnerability. A third strategy, which is of particular interest to us, is developing technical aids to provide assistance in detecting and rebuffing an attack such as the Topic Blacklist system described in \cite{B&H}. 

We focus on those areas of social engineering which involve impersonation based on pretexting where a targeted employee, such as a call centre operator or IT support person, is persuaded that the social engineer is the intended victim in order to obtain elements of the victim's critical identifying information. This requires the social engineer to create a fictitious scenario, the pretext, based on sufficient personal information about the victim to convince the target of its veracity. Note that because the target does not know everything about the victim, the information does not have to be entirely accurate but the scenario as a whole does need to be coherent and plausible. 

The underlying issue is that the social engineer is unlikely to be able to assemble enough factual information to construct a sufficiently complete profile of the victim to provide that sense of coherence. However enough gaps can be filled in by inferring missing information from general background data that the social engineer can acquire through thorough research. We have argued in Section \ref{inference} that the form of inference is analogical taking place in an explanatory context. However the psychological evidence suggests that this rests upon a sense of statistical relevance associated with the background data meaning, roughly, that some things appear intuitively more likely than others. We argue that this sense of statistical relevance is grounded in a perhaps informal knowledge of statistical data.

Our central point is that some elements, i.e. attribute values, of a victim's profile are much more likely to have been inferred than others and these attributes need to be identified as unreliable in the sense that they do not add significantly to the profile's veracity and so should be discounted. Further, an estimate of reliability can be obtained by analysing the formal statistical data relevant to the problem based on identifying subsets of the profile which have high probability given the other elements of the profile.

We assume that the data is in the form of a multivariate contingency table (Section \ref{contingency}) so that the mathematical problem is that of analysing conditional association in the table using a log linear technique \cite{Bergsma}. In this case we use a very specific form of log linear model based on an orthogonal transformation of the logarithms of the table entries into an `interaction space' which reveals the statistical interactions between the data elements. The magnitude of the transformed table vector is the basis of our definitions of Probabilistic Salience and the Probabilistic Salience Function (Section \ref{salience}).

Estimating the reliability of profile attributes is then the problem of identifying subsets of profile attributes having values which occur frequently in the population. Consequently these values can be inferred probabilistically rendering the attributes unreliable. This becomes the problem of identifying conditional subtables of the contingency table in which the subtable contents are concentrated in a small number of cells. 

Because there is no simple topological relationship between these cells in general, conventional measures of centrality such as variance are not relevant leading us to define our Probabilistic Salience measures. This, we show, does indeed provide an indication of concentration in this sense (Section \ref{vindication}). Our computational results, while based on a limited data set, do support these theoretical results.

The Probabilistic Salience Function, then, is an indicator of the potential for exploitation of informal statistical information by social engineers in impersonating an individual. This can be applied to various subsets of attributes to detect vulnerabilities and can be used to warn potential targets such as operators in call centres that particular attributes are unreliable indicators of identity. 

The underlying theme of the literature review in Section 1.1 is that the most effective method of defending against social engineering is to maximise the support provided to the humans who face these attacks. We envisage that the probabilistic salience measures can be used, not only as a stand alone warning, but also as a component of a more comprehensive warning system. This would include other forms of social engineering detector such as the Topic Blacklist \cite{B&H} or indicators extracted from the knowledge graph of \cite{Wang2} as discussed there. Indeed our proposed probabilistic salience indicator could conceivably be used to enhance the automatic social engineering vulnerability scanner described in \cite{Edwards}.

The principal objective of the probabilistic salience indicator is to alert the target to the presence, in the information presented by the attacker, of information that could have been inferred rather than be reliable factual information. In the context of the RISP model of information processing, this should increase the target's information insufficiency level to the point where the target is driven to process the attacker's claims systematically rather than heuristically, significantly increasing the chances of detecting the flaws and inconsistencies that would indicate the claims are false.

It is conceivable that RISP concepts could be incorporated into security awareness training \cite{Saleem}, \cite{Fan}, \cite{B&H} \cite{Grimes} in order to enhance a target's response to a probabilistic salience alert by increasing the target's information sufficiency threshold \cite{Luo}. This would increase the target's information insufficiency level thereby making the target more sensitive to the probabilistic saliency alert as well as other forms of alert. Indeed because we have focussed on only one aspect of RISP, it seems possible that other aspects could also be incorporated into security awareness training suggesting that this could be a fruitful area of future research.

Finally we recognise that our Probabilistic Salience measures are a double edged sword in the sense that social engineers could use the same process to find inferable attribute values to exploit. This leads to the suggestion that contingency table data of the type used here could be `sanitised' similarly to data bases modified using differential privacy techniques. In this case we speculate in Section \ref{limiting} that the sensitivity of contingency table data to inference could be reduced by limiting the magnitude of the transformed table vector in `interaction space' thus reducing the Probabilistic Salience measures without significantly distorting the data.

\section{Acknowledgment}  \nonumber
The financial support for GS as well as the provision of meeting facilities by CSIRO Data 61 is gratefully acknowledged. We also appreciate the comments of the anonymous reviewers which resulted in a substantial improvement to the initial submission.

\end{document}